\documentclass[12pt,a4paper]{article}
\usepackage{amsthm,amssymb,amsmath}
\usepackage{xcolor,complexity,float}
\usepackage{graphicx}
\usepackage{thmtools}

\usepackage{hyperref,fullpage}
\newcommand{\var}{{\sf var}}

\newcommand{\lab}{{\sf label}}
\newcommand{\rank}{{\sf rank_{\varphi}}}

\newcommand{\lo}{{\cal L}}

\newcommand{\kub}{k\mbox{-}unbalanced}
\newcommand{\cintabp}{{\sf strict~circular\mbox{-}interval~ABP}}

\newtheorem{defn}{Definition}

\newtheorem{lemma}{Lemma}

\newtheorem{claim}{Claim}
\newtheorem{theorem}{Theorem}

\title{Lower bounds for multilinear   bounded  order ABPs}
\author{C. Ramya \hspace*{15 mm} B. V. Raghavendra Rao \\
IIT Madras, Chennai, India\\
\{ramya,bvrr\}@cse.iitm.ac.in}

\begin{document}

\maketitle

%\begin{abstract}
%Proving super-polynomial lower bounds on the size of syntactic multilinear Algebraic Branching  Programs (ABPs) computing an explicit polynomial is an important  open problem in Algebraic Complexity Theory. The order in which variables appear along the source to terminal path in any syntactic multilinear ABP can be viewed as a permutation in $S_n$. In this article we consider the special classes of \smabp s where order of occurrence of variables along a source to terminal path is restricted. To begin with, we give a new decomposition lemma for \smabp s that sub-divides the branching program into sub-programs that are almost balanced with respect to the number of variables. Using this decomposition lemma, we prove lower bounds for the following sub-classes of \smabp s:
% \begin{enumerate}
% \item Strict circular-interval ABPs: For every subprogram the  index set of  variables occurring in it  is contained in some circular interval of $\{1,\ldots, n\}$. 
% \item  $\lo$-ordered  ABPs: There is a set of $\lo$ permutations (orders) of variables  such that every path in the ABP reads variables in one of these $\lo$ orders.  
% \end{enumerate}
% We prove  exponential bound on the size of strict circular-interval ABPs computing an explicit multilinear polynomial. Further, for the same polynomial, we show that any sum of  $\lo$ ordered  ABPs  (for $\lo \le 2^{n^{1/2-\epsilon}}$, $\epsilon>0$) requires exponential many summands.
% \end{abstract}

\begin{abstract}
Proving super-polynomial size lower bounds for syntactic multilinear Algebraic Branching  Programs(smABPs) computing an explicit polynomial is a challenging problem in Algebraic Complexity Theory. The order in which variables in $\{x_1,\ldots,x_n\}$ appear along a source to sink path in any smABP can be viewed as a permutation in $S_n$. In this article, we consider the following special classes of smABPs where the order of occurrence of variables along a source to sink path is restricted:

%We begin with a new decomposition theorem for smABPs that divides the branching program into sub-programs that are almost balanced with respect to the number of variables. Further, we give a reduction for smABPs to depth-4 syntactic multilinear formulae of the form $\Sigma\Pi^{[\alpha]}(\Sigma\Pi)^{[\beta]}$ where $\alpha$ and $(\Sigma\Pi)^{[\beta]}$ denote the fan-in and  polynomials in at most $\beta$ variables respectively with $\alpha,\beta$ both being sub-linear. In this paper, we study structural aspects and lower bounds for the following sub-classes of smABPs:

\begin{enumerate}
\item \textbf{Strict circular-interval ABPs:} For every subprogram the  index set of  variables occurring in it  is contained in some circular interval of $\{1,\ldots, n\}$. 
 \item  \textbf{$\lo$-ordered  ABPs:}  There is a set of $\lo$ permutations (orders) of variables  such that every source to sink path in the ABP reads variables in one of these $\lo$ orders.  
 \end{enumerate}
 
We prove exponential (i.e. $2^{\Omega(\sqrt{n}/\log n)}$) lower bound  for the  size of a  strict circular-interval ABP computing an explicit  multilinear $n$-variate polynomial in ${\sf VP}$. For the same polynomial, we show that any sum of $\lo$-ordered ABPs of small size will require exponential (i.e. $2^{n^{\Omega(1)}}$) many summands, when  $\lo \le 2^{n^{1/2-\epsilon}}$, $\epsilon>0$. 

At the heart of above lower bound arguments is a new decomposition theorem  for  smABPs: We show that  any polynomial that can be computed by an smABP of size $S$, can be written as a sum  of  $O(S)$ many multilinear polynomials where each summand is a product of two polynomials  in at most $2n/3$ variables, computable by smABPs.  As an immediate corollary to our decomposition theorem for smABPs, we obtain a low bottom fan-in version of the depth reduction by Tavenas~[MFCS, 2013] for the case of smABPs. In particular,  we show that a polynomial that has size $S$ smABPs can be  expressed as a sum of products of multilinear polynomials on $O(\sqrt{n})$ variables, where the total number of summands is bounded by $2^{O(\log n \log S\sqrt{n})}$. Additionally, we show that $\lo$-ordered ABPs can be transformed into $\lo$-pass smABPs with a polynomial blowup in size. %Finally, we obtain an exponential separation between read-2 oblivious smABPs and ROABPs. 
\end{abstract}
\textbf{Keywords} : Computational Complexity, Algebraic Complexity Theory, Polynomials. 
\newpage 
\section{Introduction}
Algebraic Complexity Theory is concerned with classification of polynomials based on the  number of algebraic operations required to compute a polynomial from variables and constants. Arithmetic circuits, one of the most popular models for algebraic computation was introduced by Valiant~\cite{Val79}. Since their inception, arithmetic circuits have served as the primary model of computation  for polynomials. 

One of the primary tasks in Algebraic Complexity theory is proving lower bounds on the size of arithmetic circuits computing an explicit polynomial. Valiant~\cite{Val79} conjectured that the polynomial defined by the permanent of an $n\times n$ symbolic matrix is not computable by polynomial size arithmetic circuits, known  as Valiant's hypothesis and is one of the central questions  in  algebraic complexity theory.

The best known size lower bound for general classes of arithmetic circuits is only super-linear~\cite{BS83} in the number of variables. Despite several approaches,  the problem of  proving lower bounds for general classes of arithmetic circuits has remained elusive. Naturally, there have been efforts to prove lower bounds for special classes of arithmetic circuits which led to the development of several lower bound techniques. Structural restrictions such as depth and  fan-out, semantic restrictions such as multilinearity and homogeneity have received widespread attention in the literature.

 Agrawal and Vinay~\cite{AV08} showed that proving exponential lower bounds for depth four circuits is sufficient to prove Valiant's hypothesis. This initiated several attempts at proving lower bound for constant depth circuits. (See \cite{Sap} for a detailed survey of these results.)

Among other restrictions, multilinear circuits where every gate computes a multilinear polynomial have received wide attention.   Multilinear circuits are natural models for computing multilinear polynomials. In many situations, it is useful to consider a  natural  syntactic sub-class   of multilinear circuits.  A circuit is {\em syntactic multilinear} if the children of every product gate depend on disjoint sets of variables.  Raz~\cite{Raz09} obtained super-polynomial lower bounds for syntactic multilinear formulas computing the determinant or permanent polynomial which was further improved for constant depth multilinear circuits~\cite{RY09,CLS18,CELS18}.  However, the best known lower bound for syntactic multilinear circuits is only almost  quadratic~\cite{AKV18}.
 
Algebraic branching programs(ABPs) are special classes of arithmetic circuits that have been studied extensively in the past.  
 %These are algebraic variants of Boolean branching programs which are widely  considered as  natural  models for computing boolean functions and have applications in several areas of Computer Science. 
  Nisan~\cite{Nis91} obtained an exact complexity characterization of ABPs in the non-commutative setting. 
  %Toda~\cite{Tod91} showed that  polynomials computed by ABPs are exactly those that can be written as a linear projection of the determinant polynomial. 
  The problem of proving size  lower bounds for the general class of algebraic branching programs is widely open. When the ABP is restricted to be homogeneous, the best known lower bound is only quadratic in the number of variables~\cite{Kum17}. 
As ABPs are apparently more powerful than formulas but less powerful than circuits, proving lower bounds for syntactic multilinear ABPs (smABPs)
seems to be the next natural step towards the 
goal of proving super-polynomial lower bounds for syntactic multilinear circuits. Even in  the case of syntactic multilinear ABPs, no super-quadratic lower bound is known~\cite{Jan08}. 

\subsubsection*{ Models and results}
In this article, we are  interested in syntactic multilinear ABPs  and their  sub-classes where order of the appearance of  variables   along any path in the ABP is restricted.

To begin with, we give a decomposition theorem for smABPs. The decomposition obtains two disjoint sets $E_1$ and $E_2$ of edges in the branching program $P$ with source $s$ and sink $t$  such that the polynomial computed by it can be expressed as sum of $\sum_{(u,v)\in E_1}[s,u]\cdot \lab(u,v)\cdot [v,t]$ and $\sum_{(w,a)\in E_2}[s,w]\cdot \lab(w,a)\cdot [a,t]$ where $[p,q]$ is the polynomial computed by sub-program in $P$ with source $p$ and sink $q$. Also the sets $E_1$ and $E_2$ are chosen carefully such that the sub-programs obtained are more or less balanced in terms of the number of variables. More formally, we prove:

%\begin{theorem}

\begin{restatable}{theorem}{abpdecomp}
\label{lem:balance-interval}
Let $P$ be an smABP of size $S$ computing $f\in\mathbb{F}[x_1,\ldots,x_n]$. There exists edges $\{(u_1, v_1), \ldots, (u_m,v_m)\}$ and  $\{(w_1, a_1), \ldots, (w_r, a_r)\}$ in $P$ such that
\begin{enumerate}
\item[(1)] For $i\in [m]$, $ n/3 \le  |X_{s,u_i}|  \le 2n/3 $; and
\item[(2)] For $i\in [r]$, $|X_{s,w_i}| + |X_{a_i,t}| \le 2n/3$; and 
\item[(3)] $f= \sum_{i=1}^m [s,u_i]\cdot \lab(u_i,v_i)\cdot [v_i,t] +  \sum_{i=1}^r [s,w_i]\cdot \lab(w_i,a_i)\cdot[a_i,t]$.
\end{enumerate} 
\end{restatable}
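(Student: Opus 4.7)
The plan is to identify, for every source-to-sink path of $P$, a single ``frontier'' edge at which the variable set already read first crosses the threshold $2n/3$, and then to group paths by this edge to obtain the required sum decomposition.

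For a vertex $v$ of $P$, let $X_{s,v}$ (resp.\ $X_{v,t}$) denote the union of variables appearing on $s$-to-$v$ paths (resp.\ $v$-to-$t$ paths). Syntactic multilinearity of $P$ gives $X_{s,v}\cap X_{v,t}=\emptyset$, hence $|X_{s,v}|+|X_{v,t}|\le n$. The key observation I will lean on is a monotonicity lemma: for any edge $(u,v)$, every $s$-to-$u$ path extends through $(u,v)$ to an $s$-to-$v$ path, so $X_{s,u}\subseteq X_{s,v}$ and thus $|X_{s,u}|\le |X_{s,v}|$. Consequently, along any source-to-sink path $s=v_0\to v_1\to\cdots\to v_\ell=t$, the sequence $|X_{s,v_0}|,\ldots,|X_{s,v_\ell}|$ is non-decreasing, rising from $0$ up to $|X_{s,t}|$.

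I treat the main case $|X_{s,t}|>2n/3$; if $|X_{s,t}|\le 2n/3$, the conclusion is trivial since every out-edge of $s$ already satisfies condition~(2) and $f$ is the sum of the corresponding terms. Define $E$ to be the set of edges $(u,v)$ lying on some source-to-sink path with $|X_{s,u}|\le 2n/3<|X_{s,v}|$. By the monotonicity lemma, each source-to-sink path contains \emph{exactly one} edge of $E$, namely the edge where its prefix variable count first exceeds $2n/3$; conversely, every such path through a fixed $(u,v)\in E$ uses $(u,v)$ as its crossing edge. Split $E=E_1\sqcup E_2$ by taking $E_1=\{(u,v)\in E:|X_{s,u}|\ge n/3\}$ and $E_2=\{(u,v)\in E:|X_{s,u}|< n/3\}$. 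Edges in $E_1$ automatically satisfy $n/3\le|X_{s,u}|\le 2n/3$, proving~(1). For $(w,a)\in E_2$, the inequalities $|X_{s,a}|>2n/3$ together with $|X_{s,a}|+|X_{a,t}|\le n$ yield $|X_{a,t}|<n/3$, whence $|X_{s,w}|+|X_{a,t}|<2n/3$, proving~(2).

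For (3), I will expand $f$ as the sum of products of edge labels over all source-to-sink paths and regroup the summands by crossing edge. Since paths through $(u,v)\in E$ have $s$-to-$u$ prefixes and $v$-to-$t$ suffixes ranging independently over all such sub-paths, this regrouping produces $f=\sum_{(u,v)\in E}[s,u]\cdot\lab(u,v)\cdot[v,t]$; splitting the sum along $E=E_1\sqcup E_2$ then yields identity~(3). The only nontrivial ingredient is the monotonicity observation: without it, the crossing edge would not be determined purely by the global values $|X_{s,u}|,|X_{s,v}|$, and paths through the same edge could contribute to different summands, breaking the factorization. Once monotonicity is in hand, the remainder is routine accounting.
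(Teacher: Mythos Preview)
Your proposal is correct and is essentially the same argument as the paper's: the paper dresses the frontier-edge idea up as a three-color node-coloring procedure (blue for $|X_{s,v}|>2n/3$, red for $n/3\le |X_{s,v}|\le 2n/3$, green for $|X_{s,v}|<n/3$), with $E_{rb}$ and $E_{gb}$ being precisely your $E_1$ and $E_2$, and the uniqueness of the crossing edge is derived from the same monotonicity observation you isolate. Your write-up is in fact a bit more direct, since you bypass the coloring bookkeeping and go straight to the threshold-crossing definition.
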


 Let $\Sigma\Pi^{[\sqrt{n}]}(\Sigma\Pi)^{[\sqrt{n}]}$
denote the class of depth four  arithmetic circuits where the top layer of $\Pi$ gates  are products of at most $O(\sqrt{n})$  polynomials each  being a multilinear polynomial on $O(\sqrt{n})$ variables. As an immediate corollary of the above decomposition, we obtain the following low arity version of the depth reduction in~\cite{AV08,Tav13} for the case of  smABPs:

%\begin{corollary}
\begin{restatable}{coro}{depthredn}
\label{cor:depth4}
Let $P$ be a syntactic multilinear ABP of size $S$ computing a polynomial $f$ in $\mathbb{F}[x_1,\ldots,x_n]$. Then there exists a $\Sigma\Pi^{[\sqrt{n}]}(\Sigma\Pi)^{[\sqrt{n}]}$ syntactic multilinear formula  of size $2^{O(\sqrt{n}\log n \log S)}$  computing $f$.
\end{restatable}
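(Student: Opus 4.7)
The plan is to apply Theorem~\ref{lem:balance-interval} iteratively, each time to a factor that still depends on more than $\sqrt{n}$ variables, until every surviving factor is multilinear on at most $\sqrt{n}$ variables; each such small factor will then be expanded as the sum of its at most $2^{\sqrt{n}}$ monomials. Concretely, I maintain an expression $f = \sum_i \prod_j p_{ij}$ in which every $p_{ij}$ is computed by an smABP of size at most $S$; initially the expression is just $f$ itself. At each step I pick some $p_{ij}$ with $|\var(p_{ij})| > \sqrt{n}$ and apply Theorem~\ref{lem:balance-interval} to the smABP computing it, which writes $p_{ij}$ as a sum of $O(S)$ products of at most two smABPs (the edge label being folded into one of them), each factor on at most $\lceil 2|\var(p_{ij})|/3 \rceil$ variables. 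Substituting preserves the invariant, and syntactic multilinearity is maintained because the two produced factors always depend on disjoint variable subsets.

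The heart of the argument is bounding the number of decomposition steps. I track the potential $\Phi = \sum_j |\var(p_{ij})| \log\bigl(n/|\var(p_{ij})|\bigr)$ of one fixed product. When a factor of size $k$ is split in ratio $\alpha:(1-\alpha)$ with $\alpha \in [1/3, 2/3]$, a direct calculation gives $\Delta\Phi = k \cdot H(\alpha)$, where $H$ is the binary entropy, so $\Delta\Phi \ge H(1/3)\cdot k = \Omega(k)$. At termination $\Phi \le n \log n$ because each factor has size at most $\sqrt{n}$, while every decomposed factor satisfies $k > \sqrt{n}$; hence the process takes at most $O(\sqrt{n}\log n)$ iterations. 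Each iteration multiplies the outer summand count by $O(S)$, so the outer sum has at most $2^{O(\sqrt{n}\log n\log S)}$ terms, and each term is a product of at most $O(\sqrt{n}\log n)$ smABPs on at most $\sqrt{n}$ variables.

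To reach the claimed top-$\Pi$ fan-in of $O(\sqrt{n})$, I group the factors of each product into $O(\sqrt{n})$ bins of capacity $\sqrt{n}$ by greedy bin-packing; this is possible because the variable sets within a single product are disjoint and total at most $n$, with each individual factor on at most $\sqrt{n}$ variables. Merging each bin into one multilinear polynomial on at most $\sqrt{n}$ variables and expanding it as a $\Sigma\Pi$ of at most $2^{\sqrt{n}}$ monomials yields the bottom $\Sigma\Pi$ layer of size $2^{O(\sqrt{n})}$. Assembled, this produces a syntactically multilinear $\Sigma\Pi^{[\sqrt{n}]}(\Sigma\Pi)^{[\sqrt{n}]}$ formula of size $2^{O(\sqrt{n}\log n\log S)}$, as required. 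The main technical obstacle is the entropy-based termination bound under the $[1/3,2/3]$ split constraint; the remaining steps are careful but routine bookkeeping.
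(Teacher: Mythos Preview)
Your overall strategy—iterate Theorem~\ref{lem:balance-interval} until every factor has at most $\sqrt{n}$ variables, then expand each small factor monomially—is exactly the route the paper takes (via Lemma~\ref{lem:formula-interval}). The gap is in your entropy-based termination bound.

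You assume that decomposing a $k$-variable factor produces two factors of sizes $\alpha k$ and $(1-\alpha)k$ with $\alpha\in[1/3,2/3]$, i.e.\ that the split \emph{partitions} the variable set. Theorem~\ref{lem:balance-interval} does not guarantee this. In case~(2) one has $|X_{s,w_i}|+|X_{a_i,t}|\le 2k/3$, so at least $k/3$ of the variables are simply dropped from that summand. Even in case~(1) nothing forces $|X_{s,u_i}|+|X_{v_i,t}|=k$: since $v_i$ is blue, $|X_{v_i,t}|<k/3$, while $|X_{s,u_i}|$ may be as small as $k/3$. With such a lossy split and $k$ close to $\sqrt{n}$ your potential $\Phi=\sum_j |\var(p_{ij})|\log\bigl(n/|\var(p_{ij})|\bigr)$ can strictly \emph{decrease}: for instance if $k_1=k/3$ and $k_2=0$ then
\[
\Delta\Phi=\tfrac{k}{3}\log 3-\tfrac{2k}{3}\log(n/k),
\]
which is negative once $\log(n/k)>\tfrac{1}{2}\log 3$. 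Hence the ``$O(\sqrt{n}\log n)$ iterations'' bound is unsupported, and with it the $O(S)^{O(\sqrt{n}\log n)}$ count of summands.

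The paper closes this gap by a different, non-potential argument in Lemma~\ref{lem:formula-interval}: for every parse tree it assigns to each leaf a \emph{private} set $A(v)$ of at least $\sqrt{n}/3$ variables, by case analysis on whether the leaf arose from a red--blue or a green--blue edge (using the colour thresholds to manufacture the missing variables). Disjointness of the $A(v)$'s gives at most $3\sqrt{n}$ leaves per parse tree, hence at most $\binom{S^{O(\log n)}}{3\sqrt{n}}=2^{O(\sqrt{n}\log n\log S)}$ parse trees, and the corollary follows. Your bin-packing step is correct and harmless, but it cannot rescue the size bound: that bound depends on the number of decomposition steps along a single product (equivalently, the number of leaves of a parse tree), not on how many merged factors you end up with.
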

%\end{corollary}  

%Using this structure of smABPs computing polynomials on $n$ variables, we obtain an equivalent log-depth syntactic multilinear formula with quasi-polynomial blowup in size  where every leaf represents a multilinear polynomial in $O(\sqrt{n})$ variables.   Interestingly, we obtain
%that any parse tree of the formula thus obtained has at most $3\sqrt{n}$ leaves. As a corollary, 
%for smABPs computing polynomials on $n$ variables, we obtain a reduction to depth-4 syntactic multilinear formulas where bottom sum gates compute mulitlinear polynomials in at most $O(\sqrt{n})$ variables.

%\textcolor{red}{add remark about aware of chasm at depth 4 for multilinear that gives low arity. lb for depth three low arity sufficient to separate vp vnp. explicit construction not known to the best of our knowledge}

Further, using the structural property of the parse trees of formulas obtained from smABPs, we prove exponential size lower bounds for two classes of smABPs with restrictions on the variable order.

{\em Strict circular-interval} ABPs are smABPs in which the index set of variables in every subprogram is contained in some circular interval in $\{1,\ldots,n\}$. (See Section~\ref{sec:circ-interval} for a formal definition). It may be noted that every multilinear polynomial can be computed by a strict circular-interval ABP  and hence it is a  universal model for computing multilinear polynomials.  We obtain an exponential lower bound on the size of any strict circular-interval ABP computing an explicit polynomial defined by Raz and Yehudayoff~\cite{RY08}. 

\begin{restatable}{theorem}{intervallb}
\label{thm:interval}
There exists an explicit multilinear polynomial $g$ in $\mathbb{F}[x_1,\ldots,x_n]$ such that any strict circular-interval ABP computing $g$ requires size $2^{\Omega(\sqrt{n}/\log n)}$. 
\end{restatable}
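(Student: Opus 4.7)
The plan is to combine the decomposition theorem, Theorem~\ref{lem:balance-interval}, with the partial derivative matrix rank technique of Raz and Yehudayoff~\cite{RY08}, exploiting the fact that the strict circular-interval property is inherited by every subprogram of a strict circular-interval ABP.

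\emph{Step 1: depth reduction that respects circular intervals.}
First, I would apply Theorem~\ref{lem:balance-interval} iteratively to the given strict circular-interval ABP of size $S$. Because every subprogram of a strict circular-interval ABP is again a strict circular-interval ABP, the recursion that yields Corollary~\ref{cor:depth4} produces a representation
$$g = \sum_{i=1}^{T} \prod_{j=1}^{d_i} p_{i,j}, \qquad T \le 2^{O(\sqrt{n}\log n \log S)},$$
in which each factor $p_{i,j}$ is a multilinear polynomial whose variable index set forms a circular interval in $\{1,\ldots,n\}$ of size $O(\sqrt{n})$, and $d_i = O(\sqrt{n})$. The crucial point is that the circular-interval support of each bottom block is preserved through the recursion, because every subprogram isolated by the decomposition is again circular-interval.

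\emph{Step 2: hard polynomial and rank measure.}
Take $g$ to be the explicit Raz--Yehudayoff polynomial from~\cite{RY08}, and measure complexity via the rank of the partial derivative matrix $M_\pi(f)$ with respect to a balanced equipartition $\pi=(Y,Z)$ of $\{x_1,\ldots,x_n\}$. The Raz--Yehudayoff construction ensures $\mathrm{rank}(M_\pi(g))=2^{\Omega(n)}$ with high probability over a random equipartition.

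\emph{Step 3: low rank for each summand.}
For a summand $f_i=\prod_j p_{i,j}$ from Step~1 and a random $\pi$, argue that with noticeable probability $\mathrm{rank}(M_\pi(f_i))\le 2^{O(\sqrt{n})}$. The circular-interval restriction is what makes this step go through: each $p_{i,j}$ reads only a short arc of $\{1,\ldots,n\}$, so the random cut slices the arc system into few $\pi$-homogeneous pieces, and the multiplicative-with-disjoint-supports structure then factors $M_\pi(f_i)$ through a space of dimension at most $2^{O(\sqrt{n})}$. A union bound over the $T$ summands yields a single $\pi$ that is simultaneously good for $g$ and bad for every $f_i$.

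\emph{Step 4: rank comparison.}
Subadditivity of rank on such a fixed $\pi$ gives
$$2^{\Omega(n)} \;\le\; \mathrm{rank}(M_\pi(g))\;\le\;\sum_{i=1}^{T}\mathrm{rank}(M_\pi(f_i)) \;\le\; T\cdot 2^{O(\sqrt{n})},$$
so $T \ge 2^{\Omega(n)}$. Plugging the Step~1 bound $T \le 2^{O(\sqrt{n}\log n\log S)}$ into this inequality forces $\log S=\Omega(\sqrt{n}/\log n)$, which is the claimed lower bound.

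The main obstacle is Step~3: I have to couple the random partition tightly enough to the circular-interval structure that \emph{every} summand becomes rank-deficient while $g$ simultaneously attains its Raz--Yehudayoff rank. Handling the circular (as opposed to linear) geometry here is delicate, since the complement of a circular interval is itself a circular interval, so both sides of every internal cut are genuinely arcs that the random $\pi$ must split evenly. The $1/\log n$ loss in the exponent of the final bound is inherited from the $\log n$ factor in the blow-up of Corollary~\ref{cor:depth4}.
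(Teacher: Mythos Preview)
Your outline matches the paper's high-level structure (decompose via Lemma~\ref{lem:formula-interval}/Corollary~\ref{cor:depth4}, then bound the rank of each parse-tree product against the full-rank polynomial $g$), but Step~3 as you describe it does not go through, and the fix is exactly the idea the paper supplies.

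The problem is your choice of a \emph{random} equipartition. Under a uniformly random balanced $\varphi$, an arc of size $\Theta(\sqrt{n})$ is almost surely bi-chromatic, receiving roughly $\sqrt{n}/2$ variables of each color. With $\Theta(\sqrt{n})$ variable-disjoint arcs, the product $\prod_j p_{i,j}$ has rank up to $2^{n/2 - o(n)}$, not $2^{O(\sqrt{n})}$; your intended union bound over $T=2^{O(\sqrt{n}\log n\log S)}$ summands then yields nothing. The phrase ``the random cut slices the arc system into few $\pi$-homogeneous pieces'' is precisely what a random cut does \emph{not} do.

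The paper instead picks a \emph{deterministic} partition tailored to the permutation $\pi$ of the strict circular-interval ABP: set $\varphi_\pi(x_{\pi(i)})=y_i$ for $i\le n/2$ and $\varphi_\pi(x_{\pi(i)})=z_{i-n/2}$ for $i>n/2$. A circular $\pi$-interval is bi-chromatic under $\varphi_\pi$ only if it straddles one of the two boundary points of the half-circle. The ``strict'' (non-overlapping) hypothesis then guarantees that in any parse tree at most \emph{two} leaf intervals are bi-chromatic; all remaining factors are monochromatic and contribute rank $\le 1$. Hence each parse tree has rank $\le 2^{\sqrt{n}}$, and summing over parse trees gives $\rank(g)\le 2^{O(\sqrt{n}\log n\log S)}\cdot 2^{\sqrt{n}}$. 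Since $g$ has rank $2^{n/2}$ for \emph{every} partition (Lemma~\ref{lem:ry}), no probabilistic argument or union bound is needed at all, and the comparison forces $\log S=\Omega(\sqrt{n}/\log n)$.
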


%In particular, we  prove exponential lower bounds for the  class of strict circular-interval ABPs and bounded order ABPs.
%In this article, we look at branching programs with restrictions on the order in which variable appear in any source to sink path.

%The notion of interval ABPs was first introduced in \cite{•} - although their definition is quite different from ours. According to \cite{•}, interval ABPs are set-multilinear ABPs such that for any node $v$ in ABP $P$, the polynomial computed by the subprogram $P_{s,v}$ is on variables in an interval of $\{1,\ldots,n\}$.
%They show that there is an explicit polynomial such that any set-multilinear interval ABP computing it requires exponential size.

%\textcolor{red}{say that our model subsumes their model}  

%We show that there is an explicit multilinear polynomial such that any interval ABP computing it requires exponential size. 

Yet another sub-class of smABPs that we study are the class of {\em bounded order}  smABPs.
% Ordered binary decision diagrams (OBDDs),  a  restriction of branching programs,  introduced by Bryant \cite{Bry92} are read-once Boolean branching programs in which variables are restricted to appear in the same order on all directed paths. Bryant \cite{Bry92} obtained exponential lower bounds for OBDDs. 
Jansen \cite{Jan08} introduced  ordered algebraic branching programs. Ordered smABPs are smABPs with source $s$ and sink $t$ such that every path from $s$ to $t$ reads variables in a fixed order $\pi\in S_n$.  Jansen \cite{Jan08} translated the  exponential lower bound for the non-commutative model in~\cite{Nis91}  to ordered smABPs.  Ordered ABPs have also been studied in the context of the polynomial identity testing problem. Sub-exponential time algorithms were obtained for  identity testing of polynomials computed by ordered ABPs are known. (See \cite{JQS09,FSS14,GKS17} and the references therein.) Further, it is shown in~\cite{JQS09} that ordered smABPs are equivalent to read-once oblivious ABPs (ROABPs for short, see Section~\ref{sec:prelim} for a definition).

%Any polynomial in $\mathbb{F}[x_1,\ldots,x_n]$ can be computed by ordered ABPs of exponential size. 

%Ordered ABPs also gained attention in the realm of identity testing algorithms. In \cite{JQS09}, Jansen et. al. consider a more general non-multilinear model of read-$r$-$\pi$-ordered ABPs, where $\pi\in S_n$ and if $x_i$ appears before $x_j$ along any path, then $x_i$ appears before $x_j$ in the permutation $\pi$ and every variable is read at most $r$ times along any path. They devise a deterministic identity testing algorithm for read-$r$-$\pi$-ordered ABPs that run in time $2^{O(r\log r\log^2 n\log \log n)}$. This gives a deterministic $2^{O(\log^2 n\log \log n)}$ for ordered ABPs as $r=1$ in this case. 

 A natural generalization for ordered smABPs is to allow multiple orders. A smABP is $\lo$-ordered  if variables can occur in one of the $\lo$ fixed orders along any source to sink path. In this article, we study $\lo$-ordered smABPs  and obtain structural results as well as an exponential lower bound for the model. 
We show the construction given in~\cite{JQS09} for the equivalence of ROABPs and 1-ordered ABPs can be generalized to $\lo$-ordered smABPs.  In particular, we prove that ${\lo}$-ordered ABP of size $S$ can be transformed into an equivalent ${\lo}$-pass smABP of size $\poly(S,\lo)$ (Theorem~\ref{thm:order-to-pass}). Though the overall idea is simple, the construction requires a lot of book-keeping of variable orders. Combining Theorem~\ref{thm:order-to-pass} with the lower bound   for sum of $k$-pass smABPs given in~\cite{CR18}, we get an exponential lower bound for sum of $\lo$-ordered  smABPs when $k=o(\log n)$.
% Note that ${\lo}$-pass ABPs are a much simpler class compared to read-${\lo}$-ABPs but contained in read-${\lo}$-ABPs. 
%In~\cite{CR18} authors proved exponential size lower bound  against sum of $k$ pass syntactic multilinear ABPs, where $k=o(\log n)$.  Theorem~\ref{thm:order-to-pass}  gives  an exponential lower bound  for sum of $\lo$ ordered ABPs when $\lo = o(\log n)$. 
By exploiting a simple structural property of $\lo$ ordered ABPs, we  prove an exponential lower bound for sum of $\lo$ ordered smABPs even when $\lo$ is sub-exponentially small:

%\begin{theorem}
\begin{restatable}{theorem}{orderedlb}
\label{thm:sum-ordered}
Suppose $\lo\leq 2^{n^{1/2-\epsilon}}$ for some $\epsilon>0$ and $g = f_1+f_2 +\dots + f_{m}$, where $f_i$ is computed by $\lo$-ordered ABP $P_i$ of size $S_i$. Then, either there is an $i\in [m]$ such that $S_i = 2^{\Omega(n^{1/40})}$ or $m = 2^{\Omega(n^{1/40})}$, where $g$ is the polynomial defined in~\cite{RY08}.
\end{restatable}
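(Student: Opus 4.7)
The plan is to measure complexity by the rank of the partial derivative matrix $M_f(A,B)$ under a random balanced partition $A\sqcup B=\{x_1,\dots,x_n\}$, in the spirit of Raz--Yehudayoff~\cite{RY08}, and to exploit the rigidity that the $\lo$-ordered condition imposes on the variable sets appearing in Theorem~\ref{lem:balance-interval}.

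First I would record the following structural observation: for an $\lo$-ordered smABP $P_i$, every vertex $u$ has $X_{s,u}$ coinciding with a prefix of some one of the $\lo$ orders, so the variable sets of the sub-factors produced by Theorem~\ref{lem:balance-interval} all lie in a common pool $\mathcal{Y}$ of size $O(n\lo)$. Since each sub-program is itself $\lo$-ordered, I would iterate the decomposition $O(\log n)$ times (as in Corollary~\ref{cor:depth4}) and rewrite $f_i$ as a sum of $2^{O(\sqrt{n}\log n\log S_i)}$ products $\prod_j p_j(Z_j)$, where each product has $O(\sqrt{n})$ factors on disjoint sets $Z_1,\dots,Z_{\sqrt{n}}$; each $Z_j$ is an interval of one of the $\lo$ orders and is therefore drawn from a pool $\mathcal{I}$ of size $O(n^2\lo)=2^{O(n^{1/2-\epsilon})}$.

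For a uniformly random balanced $(A,B)$, tensor factorization of such a product on disjoint blocks yields
\[
\mathrm{rank}\bigl(M_{\prod_j p_j}(A,B)\bigr) \;\le\; 2^{n/2-\frac12\sum_j \big|\,|A\cap Z_j|-|B\cap Z_j|\,\big|}.
\]
For a fixed set-partition $\{Z_1,\dots,Z_{\sqrt{n}}\}$ drawn from $\mathcal{I}$, the exponent on the right has expectation $\Omega(n^{3/4})$ and concentrates at this scale via a McDiarmid-type inequality. Union-bounding over the relevant set-partitions and using sub-additivity of rank across all $m\cdot 2^{O(\sqrt{n}\log n\log S)}$ summands, I would obtain
\[
\mathrm{rank}(M_g(A,B)) \;\le\; m\cdot 2^{O(\sqrt{n}\log n\log S)}\cdot 2^{n/2-\Omega(n^{\delta})}
\]
for some explicit $\delta>0$. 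Pairing this with the Raz--Yehudayoff lower bound $\mathrm{rank}(M_g(A,B))\ge 2^{n/2-o(n)}$ and balancing the parameters then forces either $m=2^{\Omega(n^{1/40})}$ or $\max_i S_i=2^{\Omega(n^{1/40})}$.

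The delicate step is the union bound itself: the pool of intervals already has size $2^{\Theta(n^{1/2-\epsilon})}$, so a naive count of set-partitions of $[n]$ into $\sqrt{n}$ such intervals is $2^{\Theta(n^{1-\epsilon})}$, which comfortably exceeds the $\exp(-\Omega(\sqrt n))$ concentration that a plain McDiarmid inequality provides. Closing this gap --- either by sharpening the lower-tail concentration of $\sum_j \bigl|\,|A\cap Z_j|-|B\cap Z_j|\,\bigr|$ (using that the imbalances across a single partition are coupled through the constraint $|A|=n/2$), or by leveraging the nested interval structure produced by iterating Theorem~\ref{lem:balance-interval} to argue that far fewer partitions actually occur --- is the main technical obstacle, and the exponent $n^{1/40}$ emerges as the quantitative trade-off point when these competing bounds are balanced.
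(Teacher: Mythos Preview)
Your proposal does not close the argument, and the gap you flag in the last paragraph is exactly where it fails. The union bound you need is over roughly $\binom{O(n^2\lo)}{O(\sqrt n)}$ set-partitions, which for $\lo=2^{n^{1/2-\epsilon}}$ is $2^{\Theta(n^{1-\epsilon})}$; the lower-tail concentration for $\sum_j\bigl||A\cap Z_j|-|B\cap Z_j|\bigr|$ under a uniform balanced partition is only $\exp(-O(\sqrt n))$, and neither of the two fixes you suggest (sharpening the tail via the coupling $|A|=n/2$, or exploiting nestedness of the intervals) actually recovers the missing $n^{1/2-\epsilon}$ in the exponent. The nested structure from iterating Theorem~\ref{lem:balance-interval} does not cut down the number of relevant partitions below $2^{n^{1-\epsilon}}$, because the branching at each level of the recursion already multiplies in a factor depending on the program size, and the interval pool itself is that large. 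So as written this is a plan with a hole, not a proof.

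The paper sidesteps the union-bound obstruction entirely by a different structural observation. After building the formula $\Phi$ of Lemma~\ref{lem:formula-interval}, look at any parse tree $T$ with leaves labelled by sub-programs $[u_1,v_1],\dots,[u_\ell,v_\ell]$; concatenating these sub-programs along the corresponding $s\rightsquigarrow t$ path shows that if $r_i$ is the number of distinct variable orders realised inside $[u_i,v_i]$, then $\prod_i r_i\le \lo$. Hence at most $\log\lo\le n^{1/2-\epsilon}$ of the leaves can have $r_i\ge 2$; all the remaining $\Omega(\sqrt n)$ leaves are $1$-ordered, i.e.\ ROABPs (Lemma~\ref{lem:order-formula}). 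For each such ROABP leaf one invokes the off-the-shelf bound of Lemma~\ref{lem:roabp}, which already gives $\rank(p_i)\le S^{\log n}2^{|X_i|/2-n^{1/20}}$ with failure probability $2^{-n^{1/20}}$ under a single uniformly random $\varphi$. The union bound is now only over the at most $S^2$ distinct leaf polynomials, not over interval-partitions, and the few non-ROABP leaves are absorbed trivially. This yields Lemma~\ref{lem:rank-ordered} and then Theorem~\ref{thm:sum-ordered} by sub-additivity. The key idea you are missing is this ``all but $\log\lo$ leaves are ROABPs'' counting argument; once you have it, no delicate concentration or large union bound is needed.
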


%Finally, we  obtain an exponential separation between read-2  oblivious  smABPs and ROABPs (Corollary~\ref{cor:read2}). This is obtained by a simple construction to show that the hard polynomial given by Kayal et al.~\cite{KNS16} can be computed by a read-2  smABP.

\paragraph*{Related works}
It may be noted that the depth reductions in~\cite{AV08,Tav13} also preserve syntactic multilinearity.  However, Corollary~\ref{cor:depth4} obtains a sum of products of low-arity polynomials, which is new. As far as we know,  none of the known depth reductions achieve this arity bound. 
Saptharishi~\cite[Chapter 18, Lemma~18.8]{Sap}, observes that proving  an exponential lower bound for low bottom fan-in $\Sigma\Pi\Sigma$ circuits is enough to separate ${\sf VP}$ from ${\sf VNP}$.  However, as far as we are aware,  the argument by Saptharishi~\cite{Sap} uses random restrictions to variables, and does not   lead directly to a depth reduction.

In~\cite{AR16} Arvind and Raja have considered  interval ABPs  where   for every node $v$  reachable from the source, the sub-program with $v$ as the sink node must have an interval as the variable set. They proved exponential size lower bound for interval ABPs assumming the sum of squares conjecture.  Our model though is more restrictive than the one in~\cite{AR16}, our lower bound argument is unconditional.  

%In~\cite{CR18} the authors have proved lower bound against sum of ROABPs, sum of $k$-pass formulas computing  the polynomial defined by Raz and Yehudayoff~\cite{RY08}. Our result (Theorem~\ref{thm:sum-ordered}) is much stronger than the results in~\cite{CR18}, since it allows exponentially many orders, whereas the arguments in~\cite{CR18} work only when $k=o(\log n)$. 

%Proofs omitted due to page limits can be found in the Appendix.

\section{Preliminaries}
\label{sec:prelim}
In this section we include necessary definitions of  all models  and notations used.  Let $X= \{x_1,\ldots, x_n\}$ denote a finite set of variables.

An {\em arithmetic circuit}~$C$ over $\mathbb{F}$ is a directed acyclic graph with vertices of in-degree at most 2.  A vertex of out-degree 0 is called an output gate. 
%In most of the cases, a circuit will have only one output gate. 
 The vertices of in-degree 0 are called input gates and are labeled  by elements from $X \cup \mathbb{F}$. All internal gates are labeled either  by  $+$ or $\times$. Every gate in $\cal{C}$ naturally computes a polynomial and the polynomial computed at the output gate is the output of the circuit.  The {\em size} of an arithmetic circuit is the number of gates in $\mathcal{C}$ and {\em depth} of $\mathcal{C}$ is the length of the longest path from an input gate to the output gate in $\mathcal{C}$. An {\em arithmetic formula} is an arithmetic circuit where the underlying undirected graph is a tree. %The {\em product-height} of a gate $v$ is the number of product gates on any path from $v$ to root.

% \begin{defn}{(Parse Trees).} 
 A {\em parse tree } $T$ of an arithmetic formula $F$  is a sub-tree of $F$ containing the output gate of $F$ such that  for every $+$ gate $v$  of $F$ that is included in $T$, exactly one child of $v$ is in $T$ and for every $\times $ gate $u$ that is in $T$, both children of $u$ are in $T$.
% \end{defn} 

  An {\em algebraic Branching Program} (ABP) $P$  is a directed acyclic graph with one vertex $s$ of in-degree 0 (source) and one vertex $t$ of out-degree 0 (sink). The vertices of the graph are partitioned into layers $L_0,L_1,\ldots,L_{\ell}$ where edges are from vertices in layer $L_i$ to those in $L_{i+1}$ for every $0\leq i\leq \ell-1$. The source node  $s$ is the only vertex in layer $L_0$ and the sink  $t$ is the only vertex in layer $L_{\ell}$.  Edges  in $P$  are labelled by  an element in $X \cup \mathbb{F}$ and let $\lab(e)$ denote the label of an edge $e$. The width of the ABP $P$ is $\max_{i}\{|L_i|\}$ and size of the ABP $P$ is the number of nodes in $P$.   Let weight of a path be the product of its edge labels.  The polynomial  computed by an ABP $P$  is the sum of weights of all $s$ to $t$ paths in $P$.  For  nodes $u$ and $v$ in $P$, let $[u,v]_P$ denote the polynomial computed by the sub-program of $P$ with $u$ as the source node and $v$ as the sink node. We drop the subscript from $[u,v]_P$ when $P$ is clear from the context. Let $X_{u,v}$ 
  denote the set of all variables that occur as labels in any path from $u$ to $v$ in $P$.  %For an edge $e$ in $P$, let denote the label of $e$ in $P$. 

An ABP $P$ is said to be {\em syntactic multilinear} (smABP)  if every  variable occurs at most once in every $s$ to $t$ path in  $P$.  
 $P$ is said to be an {\em Oblivious-ABP} if for every layer $L$ in $P$, there is a variable $x_{i_L}$ such that every edge from the layer $L$  is labeled from  $\{x_{i_L}\} \cup \mathbb{F}$.
An smABP $P$ is said to be {\em Read-Once Oblivious} (ROABP)  if  $P$ is oblivious  and every variable    appears as edge label in at most one layer.
% and every at most one variable occurs as a label in a given layer. 

Anderson et al.~\cite{AFSSV18} defined the class of $\lo$ pass smABPs.  An oblivious  smABP  $P$ is $\lo$ pass, if there are layers $i_1<i_2< \ldots< i_{\lo}$ such that for every $j$, between layers $i_j$ and $i_{j+1}$  the program $P$ is is an ROABP. 
%For $k >1$, an  {\em read $k$ ABP} is an oblivious smABP where every  variable occurs in at most $k$ layers. 
%  
Let $\pi$ be a permutation of $\{1,\ldots, n\}$ and $P$ be an smABP computing an $n$ variate multilinear polynomial. An $s$ to $t$ path $\rho$ in $P$ is said to be {\em consistent} with  $\pi$, if the variable labels in $\rho$ occur as per the order given by $\pi$, i.e, $x_i$ and $x_j$  occur as edge labels in $\rho$ in that order, then $\pi(i) < \pi(j)$. For a node  $v$ of $P$,  $v$ is said to be consistent with $\pi$, if every $s$ to $v$ path is consistent with $\pi$. 

%\begin{defn} {($\lo$-ordered smABPs).}
An smABP  $P$ is said to be {\em $\lo$ ordered}, if there are $\lo$ permutations $\pi_1, \ldots, \pi_\lo$ such that for  every $s$ to $t$ path $\rho$ in $P$,  there is an $1\le i\le \lo$ such the $\rho$ is consistent with $\pi_i$.
% \end{defn}
 
We now review the partial derivative matrix of a polynomial introduced in~\cite{Raz09}. Let $\mathbb{F}$ be a field and $X=Y \cup Z$  be such that $Y\cap Z = \emptyset$ and $|Y| = |Z|$. It is convenient to represent the partition $X = Y\cup Z$ as an injective function $\varphi: X \to Y \cup Z$. For a polynomial $f$,  let $f^{\varphi}$ be the polynomial obtained by relabeling each  variable $x_i$ by $\varphi(x_i)$.

\begin{defn}{\em $($Partial Derivative Matrix.$)$}~\cite{Raz09}
Let $f\in\mathbb{F}[X]$ be a multilinear  polynomial. The {\em partial derivative matrix} of $f$ (denoted by $M_f$) with respect the partition $\varphi: X \to Y\cup Z$ is a $2^m\times 2^m$ matrix defined as follows. For  multilinear monomials $p$ and $q$ in variables $Y$ and $Z$ respectively, the entry $M_f [p,q]$ is the coefficient of the monomial $pq$ in $f^\varphi$.  
\end{defn}
 For a polynomial $f$ and a partition $\varphi$, let  $\rank(f)$ denote the rank of the matrix $M_f$ over the field $\mathbb{F}$. The following properties of $\rank(f)$ are useful: 
%The proofs of Lemmas \ref{lem:sub-aditivity} and~\ref{lem:rankub} can be found in \cite{Raz09}.

\begin{lemma}(\cite{Raz09}) Let $f,g \in \mathbb{F}[Y,Z]$.   We have:
\label{lem:rankub}
\label{lem:sub-aditivity}
\label{lem:sub-multiplicativity}
\begin{enumerate}
\item[(1)] $\rank({f+g}) \leq \rank(f)+\rank(g).$
\item[(2)] If $\var(f) \cap \var(g) = \emptyset$, then $\rank({fg}) = \rank(f) \cdot \rank(g)$.
 \item[(3)] If   $f\in\mathbb{F}[Y_1,Z_1]$ for $Y_1\subseteq Y, Z_1\subseteq Z$,  then $\rank(f)\leq 2^{\min\{|Y_1|,|Z_1|\}}$.
\end{enumerate}
\end{lemma}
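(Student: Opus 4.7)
\smallskip
\noindent\textbf{Proof plan.} All three parts flow from the observation that the map $f\mapsto M_f$ is $\mathbb{F}$-linear and compatible with the $Y/Z$-partition induced by $\varphi$. More precisely, for every multilinear monomial $p$ in $Y$ and $q$ in $Z$, the entry $M_f[p,q]$ is the $\mathbb{F}$-linear functional ``coefficient of $pq$ in $f^{\varphi}$.'' I will use this functional description in each of the three arguments.

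For (1), I would simply note that $(f+g)^{\varphi} = f^{\varphi} + g^{\varphi}$ because relabeling under $\varphi$ is a ring homomorphism, hence for every pair $(p,q)$ the coefficient of $pq$ in $(f+g)^{\varphi}$ is the sum of the two coefficients. This gives the entrywise identity $M_{f+g} = M_f + M_g$, and the conclusion follows from the standard subadditivity $\mathrm{rank}(A+B)\le \mathrm{rank}(A)+\mathrm{rank}(B)$ for matrices.

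For (2), the plan is to show that $M_{fg}$ is (up to reordering of the row/column indices) the Kronecker product $M_f \otimes M_g$; once this is established, $\mathrm{rank}(M_{fg}) = \mathrm{rank}(M_f)\cdot\mathrm{rank}(M_g)$ is a standard fact about tensor products of matrices. To prove the factorization, let $Y_f, Z_f$ be the $Y$- and $Z$-parts of the variables of $f$ under $\varphi$, and similarly $Y_g, Z_g$ for $g$. Because $\mathrm{var}(f)\cap \mathrm{var}(g)=\emptyset$ and $\varphi$ is injective, $Y_f\cap Y_g=\emptyset$ and $Z_f\cap Z_g=\emptyset$. Every multilinear monomial in $Y\cup Z$ appearing in $(fg)^{\varphi}$ thus factors uniquely as $p_f p_g q_f q_g$ with $p_f,p_g$ monomials in $Y_f,Y_g$ and $q_f,q_g$ in $Z_f,Z_g$. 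Distributing $(fg)^{\varphi} = f^{\varphi}g^{\varphi}$ shows the coefficient of $p_f p_g q_f q_g$ equals $M_f[p_f,q_f]\cdot M_g[p_g,q_g]$. Restricting $M_{fg}$ to the (only possibly nonzero) rows indexed by $Y_f\cup Y_g$-monomials and columns indexed by $Z_f\cup Z_g$-monomials and reading the product index as a pair is exactly the Kronecker product. Outside this block $M_{fg}$ is identically zero, so the rank is unchanged.

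For (3), I would observe that $M_f[p,q] \neq 0$ forces every variable of $p$ to lie in $Y_1$ and every variable of $q$ in $Z_1$, since $f^{\varphi}$ only involves variables in $Y_1\cup Z_1$. Hence after removing the identically zero rows and columns, $M_f$ becomes a $2^{|Y_1|}\times 2^{|Z_1|}$ matrix, and its rank is trivially bounded by $\min\{2^{|Y_1|},2^{|Z_1|\}} = 2^{\min\{|Y_1|,|Z_1|\}}$. The only mild bookkeeping point is confirming the nonzero block is exactly the stated size, which follows immediately from multilinearity of $f$. I do not anticipate a genuine obstacle in any of the three parts; the main care is only in (2), where I must make sure that the index reshuffling that turns $M_{fg}$ into $M_f\otimes M_g$ is a bijection on monomials, which is guaranteed by the disjoint-variables hypothesis together with injectivity of $\varphi$.
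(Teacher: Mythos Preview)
Your proof is correct and is the standard argument for these properties of the partial derivative matrix. Note, however, that the paper does not supply its own proof of this lemma: it is stated with a citation to \cite{Raz09} and used as a black box, so there is no ``paper's proof'' to compare against. Your approach---linearity of $f\mapsto M_f$ for (1), the Kronecker-product identification $M_{fg}\cong M_f\otimes M_g$ for (2), and the trivial row/column bound for (3)---is exactly the argument one finds in Raz's original paper, so nothing further is needed.
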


\noindent Let ${\cal D}$ denote the uniform distribution on the set of all partitions $\varphi:  X \to Y \cup Z$, where $|Y| = |Z| = |X|/2$.
In~\cite{CR18}, it is shown that for any polynomial computed by an  ROABP, rank of the partial derivative matrix is small with high probability:
\begin{lemma}{\em (Corollary~1 in \cite{CR18})}
\label{lem:roabp}
Let $f$ be an $N$ variate multilinear polynomial computed by a syntactic mutlilinear ROABP of size $S$. Then 
$$ \Pr_{\varphi\sim {\cal D}}[ \rank(f) \le S^{\log N}2^{N/2 - N^{1/5}}] \ge 1-2^{-N^{1/5}}.$$
\end{lemma}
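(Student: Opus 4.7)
The plan is to exploit the matrix-product representation of an ROABP together with a block decomposition of its variable order, and then bound the rank of each resulting summand via Lemma~\ref{lem:rankub}.

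Let $\pi$ be the variable order of $P$ and write $f = u^\top \prod_{i=1}^N M_i(x_{\pi(i)})\, v$, where each $M_i$ is an at-most-$S \times S$ matrix whose entries are of the form $a + b\, x_{\pi(i)}$. Partition the $N$ layer-positions into $k := \lceil \log N \rceil$ contiguous blocks $B_1, \ldots, B_k$, and set $\Pi_j := \prod_{i \in B_j} M_i(x_{\pi(i)})$. Expanding,
$$
f \;=\; \sum_{i_1, \ldots, i_{k-1}} \prod_{j=1}^k (\Pi_j)_{i_{j-1}, i_j},
$$
so $f$ is a sum of at most $S^{k-1} \leq S^{\log N}$ terms, each a product of $k$ multilinear polynomials on pairwise disjoint variable sets (one set per block).

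Fix $\varphi \sim \mathcal{D}$ and let $y_j, z_j$ denote the number of $Y$- and $Z$-variables, respectively, at positions in $B_j$. Applying Lemma~\ref{lem:sub-aditivity}(1) across the $S^{\log N}$ summands, and Lemma~\ref{lem:sub-multiplicativity}(2)--(3) inside each summand, together with the identity $\min(a,b) = (a+b)/2 - |a-b|/2$, one obtains
$$
\rank(f) \;\leq\; S^{\log N} \cdot \prod_{j=1}^k 2^{\min(y_j, z_j)} \;=\; S^{\log N} \cdot 2^{N/2 - D}, \quad\text{where } D := \sum_{j=1}^k \Bigl|\, y_j - \tfrac{|B_j|}{2} \,\Bigr|.
$$
It therefore suffices to prove $\Pr_{\varphi \sim \mathcal{D}}[D \geq N^{1/5}] \geq 1 - 2^{-N^{1/5}}$.

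Each $y_j$ follows a hypergeometric distribution with mean $|B_j|/2 \approx N/(2\log N)$ and standard deviation $\Theta(\sqrt{N/\log N})$, so by the local central limit theorem $|y_j - |B_j|/2| = \Omega(\sqrt{N/\log N})$ with constant probability, giving $\mathbb{E}[D] = \Omega(\sqrt{N \log N})$, comfortably above $N^{1/5}$. The main obstacle is upgrading this expectation bound to the stretched-exponential tail $2^{-N^{1/5}}$: a direct McDiarmid/Bernstein tail on the $2$-Lipschitz statistic $D$ only yields polynomial-in-$N$ failure probabilities. I would overcome this by exploiting the near-independence of the $y_j$'s, coupling the uniform balanced partition $\varphi$ with $k$ independent Bernoulli processes conditioned on their total sum, so that the failure event $\{D \leq N^{1/5}\}$ forces atypically small deviations on a constant fraction of the blocks; the resulting product bound across the (nearly) independent blocks yields the stated stretched-exponential probability. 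This probabilistic amplification step---making the $k = \log N$ blocks cooperate enough to beat a simple union bound---is the delicate technical core of the proof.
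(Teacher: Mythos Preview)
First, note that the paper does not itself prove this lemma; it is quoted as Corollary~1 of~\cite{CR18}, so there is no in-paper argument to compare your attempt against. That said, the route you outline has a genuine quantitative obstruction, and it sits exactly where you flagged the ``delicate technical core.''

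With $k=\lceil \log N\rceil$ blocks, the event $\{D\le N^{1/5}\}$ is simply not rare enough to support the claimed tail. Each $y_j$ is (conditionally) hypergeometric with standard deviation $\Theta\bigl(\sqrt{N/\log N}\bigr)$, so by a local limit estimate $\Pr\bigl[y_j=\lfloor |B_j|/2\rfloor\bigr]=\Theta\bigl(\sqrt{\log N/N}\bigr)$. The event that \emph{every} block is exactly balanced already forces $D=0\le N^{1/5}$, and under the very near-independence you invoke its probability is of order $\bigl(\sqrt{\log N/N}\bigr)^{\log N}=2^{-\Theta((\log N)^2)}$. Hence
\[
\Pr_{\varphi\sim {\cal D}}\bigl[D\le N^{1/5}\bigr]\;\ge\;2^{-\Theta((\log N)^2)}\;\gg\;2^{-N^{1/5}},
\]
and no amount of ``amplification across the $\log N$ blocks'' can produce a $2^{-N^{1/5}}$ failure probability: the bad event genuinely has probability at least quasi-polynomial in $1/N$. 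The choice $k=\log N$ was forced on you so that the prefactor $S^{k-1}$ matches the target $S^{\log N}$, but that same choice caps the number of near-independent factors available for anti-concentration at $O(\log N)$. A correct argument must decouple these two roles of~$k$: the $S^{\log N}$ prefactor should come from a recursive or layered decomposition that still leaves $N^{\Omega(1)}$ effectively independent sub-blocks inside each summand for the rank-deficit analysis, rather than only $O(\log N)$ of them.
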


We need the  following  polynomial defined  in~\cite{RY08}:   

\begin{defn}[Full rank Polynomial]{\cite{RY08}}
\label{def:raz-poly}
Let $n\in\mathbb{N}$ be even and   $\mathcal{W} = \{w_{i,k,j} \}_{i,k,j\in[n]}$. For any two integers $i,j\in\mathbb{N}$, we define an interval $[i,j] = \{ k\in\mathbb{N}, i\leq k\leq j \}$. Let $|[i,j]| = j-i+1$,  $X_{i,j} = \{ x_p \mid p\in [i,j]\} $ and $W_{i,j}=\{ w_{i',k,j'}\mid i',k,j'\in[i,j] \}$. Let  $\mathbb{G}=\mathbb{F}(\mathcal{W})$, the rational function field. For every $[i,j]$ such that $|[i,j]|$ is even we define a polynomial $g_{i,j}\in\mathbb{G}[X]$  as 
 $g_{i,j}=1$ when   $|[i,j]|=0$  and 
 if $|[i,j]|>0$ then, {\small $g_{i,j }\triangleq (1+x_ix_j)g_{i+1,j-1} + \sum_{k}w_{i,k,j}g_{i,k}g_{k+1,j}.$}
where $x_k$, $w_{i,k,j}$ are distinct variables, $1\le k\le j$ and the summation is over  $k\in [i+1,j-2]$ such that  $|[i,k]|$ is  even.  Let $g\triangleq g_{1,n}$.
\end{defn}

It is known that   for  any partition $\varphi$, $\rank(g)$ is the maximum possible value :

\begin{lemma}{\em \cite[Lemma 4.3]{RY08}}
\label{lem:ry}
Let $n\in\mathbb{N}$ be even   and  $\mathbb{G}$ as above.  Let $g\in\mathbb{G}[X]$ be the polynomial in Definition \ref{def:raz-poly}. Then for any $\varphi\sim {\cal D}$, 
$\rank({g})= 2^{n/2}$.
\end{lemma}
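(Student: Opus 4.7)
The plan is to prove by induction on $|[i,j]|$ a stronger claim for every even sub-interval $[i,j]\subseteq[1,n]$: $\rank(g_{i,j}) = 2^{\min(y_{i,j},\,z_{i,j})}$, where $y_{i,j}$ and $z_{i,j}$ count the $Y$- and $Z$-variables in $X_{i,j}$ under $\varphi$. Applied to $[i,j]=[1,n]$ with $\varphi\sim{\cal D}$ balanced, this immediately gives $\rank(g)=2^{n/2}$. The upper bound is Lemma~\ref{lem:rankub}(3), so the task is the matching lower bound.

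The key combinatorial input is: for any $Y/Z$-coloring of an even-length interval, there exists a non-crossing perfect matching $M^*$ in which exactly $\min(y,z)$ pairs are bichromatic. A greedy induction proves this---if both colors appear, an adjacent bichromatic pair must exist, so pair it and recurse. Moreover, because $M^*$ maximizes the number of bichromatic pairs (write $\mu(M^*)$ for this count), the restriction of $M^*$ to any sub-interval cleanly separated by the non-crossing structure is again optimal for that sub-interval---otherwise a better sub-matching could be swapped back into $M^*$, contradicting optimality.

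To establish the rank lower bound, I traverse the ``parse tree'' $T^*$ induced by $M^*$ on the defining recursion of $g_{i,j}$: at each node $g_{i',j'}$, either $(i',j')\in M^*$ (and $T^*$ takes the summand $(1+x_{i'}x_{j'})g_{i'+1,j'-1}$) or $T^*$ splits at some $k^*$ that $M^*$ respects (taking $w_{i',k^*,j'}\,g_{i',k^*}\,g_{k^*+1,j'}$). In the pair case, specialize $w_{i',k,j'}=0$ for every valid $k$; by Lemma~\ref{lem:rankub}(2) (variable disjointness) and the inductive hypothesis on $[i'+1,j'-1]$, the resulting rank equals $\rank(1+x_{i'}x_{j'})\cdot 2^{\mu(M^*|_{[i'+1,j'-1]})}$, which is $2^{\mu(M^*|_{[i',j']})}$ since $\rank(1+x_{i'}x_{j'})\in\{1,2\}$ matches the bichromaticity of $(i',j')$. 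In the split case, specialize $w_{i',k,j'}=0$ for $k\neq k^*$, so $M_{g_{i',j'}}|_w=A_0+w_{i',k^*,j'}A_1$ with $A_1=M_{g_{i',k^*}g_{k^*+1,j'}}$; picking a maximal non-singular submatrix $B_1$ of $A_1$ over the remaining function field, the corresponding submatrix $B_0+wB_1$ has $\det(B_0+wB_1)$ a polynomial in $w$ with leading coefficient $\det(B_1)\neq 0$, so the rank over the full function field is at least $\rank(A_1)=2^{\mu(M^*|_{[i',k^*]})+\mu(M^*|_{[k^*+1,j']})}=2^{\mu(M^*|_{[i',j']})}$ by disjointness and induction.

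The main delicate point is ensuring in the monochromatic-endpoint case that one can choose $M^*$ so that $(i,j)\notin M^*$, forcing $T^*$ to split rather than pair suboptimally: when $x_i,x_j$ share a color and the minority color appears somewhere else in $[i,j]$, the greedy can be initiated at an interior bichromatic pair, producing an optimal $M^*$ with $(i,j)\notin M^*$. This handles the only scenario where the first-summand strategy fails, and together with the upper bound from Lemma~\ref{lem:rankub}(3) closes the induction.
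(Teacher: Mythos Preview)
The paper does not prove this lemma; it is quoted verbatim from Raz--Yehudayoff~\cite{RY08} and used as a black box, so there is no in-paper argument to compare against. Your proposal is essentially a faithful reconstruction of the original Raz--Yehudayoff induction, repackaged through the language of optimal non-crossing matchings: their proof also inducts on $|[i,j]|$, uses the first summand when $x_i,x_j$ receive different colors, and when they share a color locates (via a first-crossing/counting argument) a split point $k$ with $\min(y_{[i,k]},z_{[i,k]})+\min(y_{[k+1,j]},z_{[k+1,j]})=\min(y,z)$, then exploits that $w_{i,k,j}$ is a fresh scalar to push the rank of $A_0+w_{i,k,j}A_1$ down to $\rank(A_1)$. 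Your matching $M^*$ is just a global witness encoding all these local choices at once; the substance is the same.

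One remark on your ``delicate point'': it is in fact automatic. If $M^*$ is optimal and $(i,j)\in M^*$ is monochromatic (say both $Y$), the swap argument you already invoke forces $M^*|_{[i+1,j-1]}$ to be optimal on $[i+1,j-1]$, hence $\mu(M^*)=\mu(M^*|_{[i+1,j-1]})=\min(y-2,z)$; but $\mu(M^*)=\min(y,z)$ by optimality, so necessarily $z\le y-2$ and $\min(y-2,z)=\min(y,z)$. Thus the pair case goes through without any special choice of $M^*$, and you need not argue that the greedy can be steered away from the endpoints. This does not affect correctness, only economy.
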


\section{A variable-balanced decomposition for syntactic multilinear ABPs}
\label{sec:decomp}
In this section, we give a new decomposition theorem for smABPs.  The decomposition can be seen as a variable  balanced version of the well known decomposition of arithmetic circuits given by Valiant et al.~\cite{VSBR83} for the case of smABPs. In fact, we show that  a syntactically multilinear ABP can be divided  into sub-programs that are almost balanced in terms of the number of variables. 

%\begin{theorem}
%\label{lem:balance-interval}
%Let $P$ be an smABP of size $S$ computing $f\in\mathbb{F}[x_1,\ldots,x_n]$. There exists edges $\{(u_1, v_1), \ldots, (u_m,v_m)\}$ and  $\{(w_1, a_1), \ldots, (w_r, a_r)\}$ in $P$ such that
%\begin{enumerate}
%\item[(1)] For $i\in [m]$, $ n/3 \le  |X_{s,u_i}|  \le 2n/3 $; and
%\item[(2)] For $i\in [r]$, $|X_{s,w_i}| + |X_{a_i,t}| \le 2n/3$; and 
%\item[(3)] $f= \sum_{i=1}^m [s,u_i]\cdot \lab(u_i,v_i)\cdot [v_i,t] +  \sum_{i=1}^r [s,w_i]\cdot \lab(w_i,a_i)\cdot[a_i,t]$.
%\end{enumerate} 
%\end{theorem}

\abpdecomp*
\begin{proof}
The proof is by a careful subdivision of the program $P$. 
We assume without loss of generality that $t$ is reachable from every node in $P$ and that that every node in $P$ has in-degree and out-degree  at most 2.   Consider the following  coloring procedure:
\begin{enumerate}
\item[(1)]  Initialize by coloring $t$ as blue. Repeat (2) until no new node is colored.
\item[(2)]  Consider  node  $u$ that is colored blue and  such that nodes $v$ and $w$ are uncolored, where $(v,u)$ and $(w,u)$ are the only edges
incoming to $u$.
For $a \in \{v,w\}$ do following : 
\begin{enumerate}
\item If $|X_{s,a}| > 2n/3$, then color $a$ as blue.
\item  If $n/3 \le |X_{s,a}| \le  2n/3$, then color $a$ as red.
\item If $|X_{s,a}| < n/3$, then color $a$ as green.
\end{enumerate}
\end{enumerate} 
At the end of the above coloring procedure we have the following:
\begin{enumerate}
\item For every node $u$ with incoming edges $(v,u)$ and $(w,u)$, if $u$ is colored blue then both $v$ and $w$ are colored. 
\item For every $s\rightsquigarrow t$ directed path $\rho$ in $P$,  exactly one of the following holds: 
\begin{enumerate}
\item $\rho$ has exactly one  edge $(v,w)$ such that $v$ is colored red and $w$ is colored blue. 
\item $\rho$ has exactly one edge $(v,w)$ such that $v$ is green colored and $w$ is colored blue.
\end{enumerate}
\item If a node $u$ is colored blue, then  every node $v$ reachable from $u$ must have color blue. 
\end{enumerate} 
Property 1 follows from the fact that a node $v$  is colored  if and only if there is an edge $(v,u)$ such that $u$ is colored blue.  For property 3,  clearly, a node $u$ is colored blue if and only of $|X_{s,u}|>2n/3$, thus every node reachable from a blue node is also colored blue. For property 2, let $\rho$ be a directed $s\rightsquigarrow t$ and $v$ be the first node along $\rho$ that is colored blue. Note  $|X_{s,s}|=0$, so $s$ cannot be colored blue. Clearly, every node that follows $v$ in $\rho$ is colored blue and $u \neq s$. Let $u$ be the node that immediately precedes $v$ in $\rho$, then clearly,  $u$ is either red or green. Uniqueness   follows from the fact that no node that precedes $u$ in $\rho$ is coloured blue and every node that succeeds $v$ in $\rho$ is colored blue, hence there cannot be another such edge.

Let $ E_{rb}=\{(u,v)\in P \mid \text{$u$ is colored red and $v$ is colored blue}\}$ and $ E_{gb}=\{(u,v)\in P \mid \text{$u$ is colored green and $v$ is colored blue}\}$.  Let $E_{rb} = \{(u_1, v_1), \ldots, (u_m,v_m)\}$ and $E_{gb} = \{(w_1, a_1), \ldots, (w_r, a_r)\}$ where $m ,r \le 2S$.
We now prove that sets $E_{rb}$ and $E_{gb}$ satisfy the required properties.\begin{itemize}
\item[(1)] For $i\in [m]$, since $(u_i,v_i)\in E_{rb}$, $u_i$ is colored red. By Step 2(b) of coloring procedure, $n/3\leq |X_{s,u_i}|\leq 2n/3$.
\item[(2)] For $i\in [r]$, since $(w_i,a_i)\in E_{gb}$, $w_i$ is colored green and $a_i$ is colored blue. By Step 2(c) of coloring procedure, $|X_{s,w_i}|<n/3$ and by Step 2(a), $|X_{s,a_i}|>2n/3$. Since $P$ is syntactic multilinear,  $|X_{s,a_i}|+|X_{a_i,t}|\leq n$ implying $|X_{a_i,t}|\leq n/3 $. Therefore, $|X_{s,w_i}|+ |X_{a_i,t}| \le 2n/3$.
\item[(3)] By Property 2, $s\rightsquigarrow t$ paths in $P$ are partitioned into paths that have exactly one edge in $E_{rb}$ and paths that have exactly one edge in $E_{gb}$. Therefore,
\begin{align*}
f &= \sum_{\rho:s \rightsquigarrow t } {\sf wt}(\rho) = \sum\limits_{\substack{\rho:s \rightsquigarrow t, ~\rho\cap E_{rb}\neq \emptyset}}{\sf wt}(\rho) + \sum\limits_{\substack{\rho:s \rightsquigarrow t ,~ \rho\cap E_{gb}\neq \emptyset  } } {\sf wt(\rho)} \\
&=  \sum_{i=1}^m [s,u_i]\cdot \lab(u_i,v_i)\cdot [v_i,t] +  \sum_{i=1}^r [s,w_i]\cdot \lab(w_i,a_i)\cdot[a_i,t].
\end{align*}
where ${\sf wt}(\rho)$ denotes the product of edge labels of path $\rho$. 
\end{itemize}
\end{proof}
The above decomposition allows us to obtain small depth formulas for syntactic multilinear ABPs with quasi-polynomial blow up in size.  In the following, we show that a syntactic multilinear ABP can be computed by a $\log$-depth syntactic multilinear formula were each leaf represents a multilinear polynomial on $O(\sqrt{n})$ variables. 
\begin{lemma}
\label{lem:formula-interval}
Let $P$ be a syntactic multinear  ABP of size $S$ computing a multilinear polynomial $f$ in $\mathbb{F}[x_1,\ldots,x_n]$. Then, there is a syntactic multilinear formula $\Phi$ computing  $f$ of size $S^{O(\log n)}$ and depth $O(\log n)$ such that every leaf $w$ in $\Phi$ represents a multilinear polynomial $[u,v]_{P_w}$ for some nodes $u,v$ in $P_w$ with $ |X_{u,v}| \le \sqrt{n}$, where $P_w$ is a subprogram  of $P$. Further, any parse tree of $\Phi$ has at most $3\sqrt{n}$ leaves.  
\end{lemma}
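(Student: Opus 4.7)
My plan is to construct $\Phi$ by recursively applying Theorem~\ref{lem:balance-interval} to sub-programs of $P$, halting whenever a sub-program has at most $\sqrt{n}$ variables.

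Concretely, for each sub-program $Q = P[s_Q, t_Q]$ I would define a formula $\Phi_Q$ as follows. If $|X_{s_Q, t_Q}| \le \sqrt{n}$, then $\Phi_Q$ is a single leaf labelled by the polynomial $[s_Q, t_Q]_Q$. Otherwise, apply Theorem~\ref{lem:balance-interval} to $Q$ to obtain edges $\{(u_i, v_i)\}$ and $\{(w_j, a_j)\}$ (of total number $m + r \le 2|Q|$) and set
\[
\Phi_Q \;=\; \sum_i \Phi_{Q[s_Q, u_i]}\cdot \lab(u_i, v_i)\cdot \Phi_{Q[v_i, t_Q]} \;+\; \sum_j \Phi_{Q[s_Q, w_j]}\cdot \lab(w_j, a_j)\cdot \Phi_{Q[a_j, t_Q]}.
\]
The output is $\Phi := \Phi_P$. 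Correctness follows by induction on the variable count using property~(3) of Theorem~\ref{lem:balance-interval}, and syntactic multilinearity at each $\times$ gate follows because each combined path through a cut edge is an $s_Q\rightsquigarrow t_Q$ path in the original smABP and therefore uses pairwise disjoint variables on its three segments.

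Next I would verify the size and depth bounds, which follow in a routine manner. Properties~(1)--(2) of Theorem~\ref{lem:balance-interval} guarantee that every sub-program appearing in a recursive step has at most $2m/3$ variables, where $m$ is its parent's variable count; hence after $k = \lceil \log_{3/2}\sqrt{n}\rceil = O(\log n)$ recursion levels the variable counts drop below $\sqrt{n}$ and the recursion halts. Each level introduces one $+$-over-$\times$ layer of top $+$ fan-in $\le 2S$, so the final formula has depth $O(\log n)$ and size $S^{O(\log n)}$.

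The main obstacle is the parse-tree leaf count. A parse tree of $\Phi$ arises by picking one summand at every $+$ gate and retaining every child at each $\times$ gate, so its leaves are the polynomials obtained at the bottom of the recursion together with the labels along the picked edges. By the multilinearity argument above, the variable supports of these leaves are pairwise disjoint subsets of $\{x_1,\ldots,x_n\}$, and each is of size at most $\sqrt{n}$. To push this to the claimed bound of $3\sqrt{n}$, I would argue by induction on the parse-tree depth, exploiting the two structural guarantees of Theorem~\ref{lem:balance-interval}: in every red--blue split the first sub-program carries at least $m/3$ variables of the parent, whereas in every green--blue split the two sub-programs together consume at most $2m/3$ of the parent's variables (the remainder being ``paid out'' into other summands). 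Combining these with the disjointness budget of $n$ variables should force the total leaf count of any parse tree to be $O(\sqrt{n})$, and a careful constant-tracking gives the bound $3\sqrt{n}$. The delicate point is to show that green--blue splits, where both children may individually be small, do not let the leaf count blow up beyond linear in $\sqrt{n}$.
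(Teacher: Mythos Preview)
Your recursive construction of $\Phi$ and the size/depth analysis match the paper's proof essentially line for line, so that part is fine.

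The gap is in the parse-tree leaf bound. You correctly isolate the difficulty---in a green--blue split both children can individually carry fewer than $\sqrt{n}/3$ variables, so the naive argument ``leaf supports are disjoint and each has size $\ge\sqrt{n}/3$'' fails---but your resolution is only a promise (``should force\ldots careful constant-tracking gives $3\sqrt{n}$''). An induction on parse-tree depth with an invariant like ``a parse tree of $\Phi_Q$ has at most $3|X_Q|/\sqrt{n}$ leaves'' does not close cleanly: when one child drops to $\le\sqrt{n}$ variables it contributes a full leaf while returning very little to the variable budget, and you have not stated an invariant that absorbs this. The phrase ``the remainder being paid out into other summands'' is also off: in a parse tree you choose \emph{one} summand at each $+$ gate, so the variables lost in a green--blue split are not paid to anyone---they simply disappear from the tree.

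The paper does not use induction here; it uses a charging argument. For each non-constant leaf $v$ of a parse tree $T$, let $u$ be the first $+$-gate ancestor with $|X_u|>\sqrt{n}$, corresponding to a subprogram $[p,q]$. One assigns to $v$ a set $A(v)\subseteq X_u$ of size at least $|X_u|/3\ge\sqrt{n}/3$, obtained by taking $X_u$ minus the variables of $v$'s sibling at that split (e.g.\ if the split was red--blue and $v=[v_i,q]$, set $A(v)=X_u\setminus(X_{p,u_i}\cup\{\lab(u_i,v_i)\})$; the green--blue cases are analogous). The point is that $A(v)$ may contain variables that do \emph{not} appear in $v$ at all---precisely the ``lost'' variables you noticed---so every leaf is credited with at least $\sqrt{n}/3$ variables. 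One then checks that distinct leaves $v,v'$ in $T$ receive disjoint sets $A(v),A(v')$: this reduces to showing $X_u\cap X_{u'}=\emptyset$ when $u\neq u'$, which holds because $u,u'$ sit below a common $\times$-gate in the syntactically multilinear formula $\Phi$. Disjointness then gives at most $n/(\sqrt{n}/3)=3\sqrt{n}$ leaves. This charging step is the idea your sketch is missing.
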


\begin{proof}
Let nodes $s$ and $t$ be source and sink of $P$. The proof constructs a formula $\Phi\def \Phi_{s,t}$ by induction on the number of variables $|X_{s,t}|$  in the program $P$. 

\noindent{\bf Base Case :} If $|X_{s,t}| \le \sqrt{n}$, then   $\phi_{s,t}$ is a leaf gate with label $[s,t]$.

\noindent{\bf Induction Step :} For induction step, suppose $|X_{s,t}|>\sqrt{n}$. By Theorem~\ref{lem:balance-interval}, we have
\begin{align}
f= \sum_{i=1}^m [s,u_i]\cdot \lab(u_i,v_i)\cdot [v_i,t] +  \sum_{i=1}^r [s,w_i]\cdot \lab(w_i,a_i)\cdot[a_i,t].
\label{eq:divide-interval} 
\end{align} 
where $u_i, v_i, w_i$ and $a_i$ are nodes in $P$, with $|X_{s,t}|/3 \le |X_{s,u_i}|  \le 2|X_{s,t}|/3$ and $|X_{s,w_i}|+|X_{a_i,t}| \le 2|X_{s,t}|/3$. Further,  $[s,u_i]\cdot \lab(u_i,v_i)$ (resp. $[s,w_i]\cdot\lab{(w_i,a_i)}$) is an smABP  with at most $2|X_{s,t}|/3+1$ (resp. $|X_{s,t}|/3$) variables.  Let 
\begin{equation}
\label{eqn:decom}
f= \sum_{i=1}^m  g_i h_i + \sum_{i=1}^r g_i'h
_i'.
\end{equation}
 where $g_i = [s,u_i]\cdot \lab(u_i,v_i) $, $h_i = [v_i,t]$, $g_i' = [s,w_i]\cdot \lab(w_i,a_i)$ and $h_i' = [a_i,t]$. 
For any $i$,  if $|X_{s,w_i}| + |X_{a_i,t}| < \sqrt{n}$, then we set $g_i' = [s,w_i]\cdot \lab(w_i,a_i) \cdot [a_i,t]$  and $h_i' =1$. By induction, suppose $\phi_i$ (resp. $\phi_i'$) be the multilinear formula that computes $g_i$ (resp. $g_i'$) and $\psi_i$ (resp. $\psi_i'$)  be that for $h_i$ (resp. $h_i'$). Set $\Phi = \sum_{i=1}^m (\phi_i\times \psi_i) + \sum_{i=1}^r (\phi_i'\times \psi_i') $. Let $T(n)$ denote the size of the resulting formula on $n$ variables. Then, $T(n)
\le 2\cdot S\cdot 2\cdot T(2n/3) = S^{O(\log n)}$. Thus, $\Phi$ is a syntactic mutlilinear formula of size $S^{O(\log n)}$ and depth $O(\log S)$ computing $f$ and by construction  every leaf represents a multilinear polynomial $[u,v]_P$ for some nodes $u,v$ in $P$ with $ |X_{u,v}| \le \sqrt{n}$.

It remains to prove that  any parse tree of $\Phi$ has at most $3\sqrt{n}$ leaves. 
 We begin with a description of the process for constructing parse sub-trees of $\Phi$. By Equation (\ref{eqn:decom}), constructing a parse tree of $\Phi$ is equivalent to the  process:
\begin{itemize}
\item[1.] Choose $b\in\{0,1\}$ (corresponds to choosing one of the  summations in Equation (\ref{eqn:decom})).
\item[2.] If $b=0$ choose $i\in \{1,\ldots, m\}$, else if $b=1$ choose $j\in \{1,\ldots, r\}$.
\item[3.] Repeat steps 1 and 2 for sub-formulas $\phi_i,\phi_i',\psi_i$ and $\psi_i'$.
\end{itemize}

Consider any parse tree $T$ of $\Phi$. % We can assume that $T$ is constructed with the choice of $b=0$ in every step. The argument is analogous for any parse tree of $\Phi$ where $b=1$ in some steps of the above process. 
It is enough to prove that every leaf in $T$ that is not labeled by $1$ is a polynomial in at least $\sqrt{n}/3$ variables.  Since $\Phi$ is syntactic multilinear any parse tree of $\Phi$ has at most $3\sqrt{n}$ leaves as required. However, it  can be noted that this may not be true always. Instead, we argue that every leaf in $T$ can be associated with a set of at least $\sqrt{n}/3$ variables such that no other leaf in $T$ can be associated with these variables, hence implying that the number of leaves in any parse $T$ of $\Phi$ is at most $3\sqrt{n}$.

Consider a leaf $v$ in $T$ having less than  $\sqrt{n}/3$ variables. Let $u$ be the first sum gate on the path from $v$ to root with $|X_u|>\sqrt{n}$. Rest of the argument is split based on  whether $b=0$ or $b=1$ at the step for choosing $v$ in the construction of parse tree $T$. Throughout this proof for any gate $u$ $X_u$  denotes the set of variables in the sub-circuit rooted at $u$ in $\Phi$.

%Let  $u=[p,q]$ for nodes $p,q\in P$ and $|X_u|>\sqrt{n}$ as $u$ is not a leaf in $\Phi$. 

First suppose that  in the construction of $T$,  $b=0$ at the step for choosing $v$. Then,  either $v=[p,u_i]\cdot \lab(u_i,v_i)$ or $v=[v_i,q]$ for some nodes $p, q, u_i,v_i$ in  $P$, where $u_i$(respectively $v_i$) is colored red(respectively blue) when the coloring procedure is performed on the sub-program with source $p$ and sink $q$. 
 If $v=[p,u_i]\cdot \lab(u_i,v_i)$, $|X_v|\geq |X_{p,u_i}|\geq |X_u|/3 \geq \sqrt{n}/3$, a contradiction to  fact that $v$ is a leaf in $T$ with fewer than $\sqrt{n}/3$ variables. Hence, $v=[v_i,q]$. Set $A(v) = X_u \setminus (X_{p,u_i} \cup \{\lab(u_i,v_i)\})$, clearly $|A(v)| \ge \sqrt{n} /3$, as $|X_u| \ge \sqrt{n}$ and  $|X_{p,u_i}| \le 2|X_u|/3$.

When the $b=1$, we have the following possibilities:
\begin{description}
\item[Case 1]  $v = [p, w_i]\cdot \lab(w_i, a_i)\cdot [a_i,q]$. In this case, set $A(v) = X_u$. Then $|A(v)|\geq \sqrt{n}/3$.  
\item[Case 2]  $v = [p, w_i]\cdot \lab(w_i, a_i)$. In this case, set $A(v) = X_u \setminus \var([a_i,q])$. 

Then $|A(v)|=|X_u|-|\var([a_i,q])| \geq \sqrt{n}/3$ as $|X_{a_i,q}|\leq 2|X_u|/3$ and $|X_u|>\sqrt{n}$. 
\item[Case 3] $v = [a_i,q]$. Set $A(v) = X_u \setminus (\var([p,w_i]) \cup \{\lab(w_i,a_i)\})$. Then $|A(v)|=|X_u|-|\var([p,w_i])| \geq \sqrt{n}/3$ as $|X_{p,w_i}|\leq 2|X_u|/3$ and $|X_u|>\sqrt{n}$.
\end{description}

%It may be noted that, in all of the above cases $|A(v)| \ge\sqrt{n} /3$. 

It remains to prove that, for any two distinct leaves $v$ and $v'$ in $T$ such that $A(v)$ and $A(v')$ are defined, $A(v) \cap A(v') = \emptyset$. Let $u$ and $u'$ respectively be parents of $v$ and $v'$ in $T$.

When $u=u'$, either $v=[p, w_i]\cdot \lab(w_i, a_i)\cdot [a_i,q],v'=1$ or vice-versa or $v=[p, w_i]\cdot \lab(w_i, a_i) ,v'=[a_i,q]$ or vice-versa. As $A(v)$ is defined only for non-constant leaves, the only case is when $v=[p, w_i]\cdot \lab(w_i, a_i) ,v'=[a_i,q]$ or vice-versa. In any case, we have $A(v)\cap A(v')=\emptyset$.  Now suppose, $u\neq u'$ and   $A(v) \cap A(v') \neq \emptyset$.  Then, we have we have $X_u \cap X_{u'} \neq \emptyset$ as $A(v)\subseteq X_u$ and $A(v')\subseteq X_{u'}$.  From the  fact that $u,u'$ appear in the same parse tree we can conclude the least common ancestor of $u$ and $u'$ in $\Phi$ must be a $\times$ gate.  Let $[p,q]$ and $[p'q']$ be the sub-programs of $P$ that correspond to $u$ and $u'$ respectively. By the construction of $\Phi$, we can conclude that either there is a path from $q$ to $p'$ or there is a path from $q'$ to $p$ in $P$. Either of the cases is a contradiction to the fact that $P$ is syntactic multilinear. 
%As $P$ is a strict circular-interval ABP, corresponding to $u$, there is a collection ${\cal J}$ of non-overlapping circular intervals. Let $I$ be the smallest circular interval obtained that contains every circular interval in ${\cal J}$. As $|X_u|> \sqrt{n}$, we have that length of interval $I$ is $\eta\in \{\sqrt{n},\ldots,n\}$. By construction of $\Phi$ in Lemma \ref{lem:formula-interval}, we know that $ \eta/3 \le |X_{p,u_i}| \le 2\eta /3 $. Since $P$ in a strict circular-interval ABP, the variables in $I\setminus X_{p,u_i}$ cannot appear in any parse tree that contains $v$. \textcolor{red}{Else there exists nodes $u',a',v'$ in $P$ such that intervals $I_{u'a}$ and $I_{a'v}$ are overlapping, a contradiction to $P$ being a strict circular-interval ABP.}
%Thus $v$ can be associated with $|I\setminus X_{p,u_i}|\geq \eta -\frac{2\eta}{3}\geq  \sqrt{n}/3$ variables that will not appear in any other leaf in any parse tree containing $v$. Repeating this for every leaf of $T$ with less than $\sqrt{n}/3$ variables, we conclude that the $T$ has at most $3\sqrt{n}$ leaves.
\end{proof}

Now, we obtain a reduction to depth-4 formulas for syntactic multilinear ABPs. Denote by $\Sigma^{[T]}\Pi^{[d]}(\Sigma\Pi)^{[r]}$ the class $\Sigma_{i=1}^{T}\Pi_{j=1}^{d}Q_{ij}$ where $Q_{ij}$'s are mulitlinear polynomials in  $O(r)$ variables. As a corollary to Lemma \ref{lem:formula-interval} we have the following  reduction to syntactic multilinear  $\Sigma\Pi^{[\sqrt{n}]}(\Sigma\Pi)^{[\sqrt{n}]}$ formulas for smABPs.

%\begin{corollary}
%\label{cor:depth4}
%Let $P$ be a syntactic multilinear ABP of size $S$ computing a polynomial $f$ in $\mathbb{F}[x_1,\ldots,x_n]$. Then there exists a $\Sigma\Pi^{[\sqrt{n}]}(\Sigma\Pi)^{[\sqrt{n}]}$ syntactic multilinear formula  of size $2^{O(\sqrt{n} \log n\log S)}$  computing $f$.
%\end{corollary} 

\depthredn*

\section{Strict Circular-Interval ABPs}
\label{sec:circ-interval}
In this section we prove an exponential size lower bound against a special class of smABPs that we call as $\cintabp$s. 
%As a warm-up, we study $\intabp$s. First, we formally define the models under consideration.

%A $\sigma$-interval $I= [i,j]$ in $\{1,\ldots,n\}$ is $I= [i,j]= \{\sigma(i),\sigma(i+1),\ldots,\sigma(j)\}$. Two intervals $I$ and $J$ in $\{1,\ldots,n\}$ are said to be {\em overlapping} if $I\cap J\neq \emptyset$ and {\em non-overlapping} otherwise.   

An interval $I=[i,j]$ in $\{1,\ldots,n\}$ is a circular $\pi$-interval if $I=\{\pi(i),\pi(i+1),\ldots,\pi(j)\}$ for some $i,j\in [n],i<j$ or $I=\{\pi(i),\pi(i+1),\ldots,\pi(n),\pi(1),\ldots,\pi(j)\}$ for some $i,j\in [n],i>j$. These intervals are called {\em circular intervals} as every such interval $[i,j]$ in $\{1,\ldots,n\}$ can be viewed as a chord on the circle containing $n$ points. Two circular intervals $I$ and $J$ are said to be {\em overlapping} if the corresponding chords in the circle intersect and {\em non-overlapping} otherwise.

%\begin{defn}{\em (Strict-Interval ABP).} A syntactic mulitlinear ABP $P$ is said to a {\em Strict-Interval ABP} if for nodes $u,v$ in $P$, the index set of $X_{uv}$ is contained in a $\sigma$-interval $I_{uv}$ in $[1,n]$. For nodes $u,a,v$ in $P$, the intervals $I_{ua}$ and $I_{av}$ are non-overlapping. \end{defn} 
We define a special class of syntactic multilinear ABPs where every  the set of variables involved in every subprogram is in some  $\pi$ circular interval. 

\begin{defn}[Strict Circular-Interval ABP] Let $\pi\in S_n$ be a permutation. 
A syntactic mulitlinear ABP $P$ is said to a {\em strict $\pi$-circular-interval ABP} if
\begin{enumerate}
\item For any pair of  nodes $u,v$ in $P$, the index set of $X_{u,v}$ is contained in some circular $\pi$-interval $I_{uv}$ in $[1,n]$; and
\item  For any $u,a,v$ in $P$, the circular $\pi$-intervals $I_{ua}$ and $I_{av}$ are non-overlapping.
\end{enumerate}
$P$ is said to be {\em strict circular-interval ABP} if it is   strict $\pi$-circular-interval ABP for some permutation $\pi$.
\end{defn} 

\subsection*{Lower bound for strict circular-interval ABPs}
In this section, we obtain an exponential lower bound against strict circular-interval ABPs.  We require a few preliminaries:

\begin{itemize}
\item[1.] For every permutation $\pi$ in $S_n$, define the partition function $\varphi_\pi: X\rightarrow Y \cup Z$ such that
for all $1\leq i\leq n/2$, $\varphi(x_{\pi(i)})=y_i$ and $\varphi(x_{\pi(n/2+i)})=z_i$. 
\item[2.]For any $\pi$ in $S_n$, $|\varphi_\pi(X)\cap Y| = |\varphi_\pi(X)\cap Z|=|X|/2$. For  polynomial $g$ in Definition \ref{def:raz-poly}, ${\sf rank}_{\varphi_\pi}(g)= 2^{n/2}$ by Lemma \ref{lem:ry}.
\item[3.]For any set $X_i\subseteq X$, let $\varphi_\pi(X_i) = \{\varphi_\pi(x) \mid x\in X_i \}$. We say $X_i$ is {\em monochromatic} if either $\varphi_\pi(X_i)\cap Y = \emptyset$ or  $\varphi_\pi(X_i)\cap Z = \emptyset$. Observe that if $X_i$ is monochromatic then for any polynomial  $p_i\in \mathbb{F}[X_i]$, we have ${\sf rank}_{\varphi_\pi}(p_i)\leq 1$. Further, we say set $X_i\subseteq X$ is {\em bi-chromatic} if $\varphi_\pi(X_i)\cap Y \neq \emptyset$ and  $\varphi_\pi(X_i)\cap Z \neq \emptyset$.
\end{itemize}

%{Throughout the remainder of this section,  it is convenient to  assume that $\pi$ in $S_n$ is the identity permutation. When $\pi$ is not the identity permutation, all arguments are analogous to $\pi$ being identity. }

%\begin{enumerate}
%\item[(1)] Let $\varphi: X\rightarrow Y \cup Z$ be the partition function where $\varphi(x_i)\in Y$ for $1\leq i\leq n/2$ and $\varphi(x_i)\in Z$ for $n/2< i\leq n$. 
%
%
%
%Observe that as $|\varphi(X)\cap Y| = |\varphi(X)\cap Z|=|X|/2$, we have by Lemma \ref{lem:ry}, $\rank(g)= 2^{n/2}$ where $g$  is the polynomial in Definition.

%\item[(2)]
%
%\end{enumerate}
In the following, we show that for any strict circular-interval ABP $P$ computing a polynomial 
$f$, there is a partition $\varphi$ such that  $\rank(f) $ is small:
\begin{theorem}
\label{thm:rank-ub-interval}
Let $P$ be a strict circular-interval ABP of size $S$ computing $f$ in $\mathbb{F}[x_1,\ldots,x_n]$. There exists a $\varphi: X\rightarrow Y \cup Z$ with $|\varphi(X) \cap Y|=|\varphi(X)\cap Z|=|X|/2$ such that $\rank(f) \le 2^{\sqrt{n} \log n \log S } 2^{\sqrt{n}}$. 
\end{theorem}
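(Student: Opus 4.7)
The plan is to combine the low-arity formula from Lemma~\ref{lem:formula-interval} with the partition $\varphi_\pi$ aligned with the circular order $\pi$ of $P$. First I would apply Lemma~\ref{lem:formula-interval} to $P$ to obtain a syntactic multilinear formula $\Phi$ of size $S^{O(\log n)}$ and depth $O(\log n)$ whose every leaf computes a subprogram polynomial $[u,v]_{P}$ on at most $\sqrt{n}$ variables lying in a circular $\pi$-interval, and whose every parse tree uses at most $3\sqrt{n}$ leaves. Expanding $f = \sum_T \mathrm{val}(T)$ over parse trees of $\Phi$, the number of parse trees is at most $|\Phi|^{O(\sqrt{n})} = 2^{O(\sqrt{n}\log n\log S)}$, since each parse tree is a subtree of $\Phi$ with only $O(\sqrt{n})$ internal gates. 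I would then set $\varphi = \varphi_\pi$ and use Lemma~\ref{lem:sub-aditivity}~(1) to write $\rank(f) \le \sum_T \rank(\mathrm{val}(T))$.

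For each parse tree $T$ with leaves $\ell_1,\ldots,\ell_L$ on pairwise disjoint variable sets $V_1,\ldots,V_L$ (by syntactic multilinearity), Lemma~\ref{lem:sub-multiplicativity}~(2) gives $\rank(\mathrm{val}(T)) = \prod_i \rank(\ell_i)$. Using the observations recorded in the preliminaries together with Lemma~\ref{lem:rankub}~(3), we get $\rank(\ell_i) \le 1$ when $V_i$ is monochromatic under $\varphi_\pi$, and $\rank(\ell_i) \le 2^{\min(|V_i\cap Y|,|V_i\cap Z|)} \le 2^{|V_i|/2} \le 2^{\sqrt{n}/2}$ when $V_i$ is bi-chromatic. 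The crux is to show that at most two leaves per parse tree have bi-chromatic $V_i$, so that $\prod_i \rank(\ell_i) \le (2^{\sqrt{n}/2})^2 = 2^{\sqrt{n}}$. I would prove this by induction along the recursive decomposition used in Lemma~\ref{lem:formula-interval}, showing that at every $\times$-gate in $\Phi$ the two children can be placed in \emph{disjoint} circular $\pi$-arcs, a property strictly stronger than non-overlapping (which a priori allows nested arcs). The key point is that the strict circular-interval structure of $P$ rules out interleaved variable placements across any subprogram split: if, for instance, $X_{[s,u_i]}$ and $X_{[v_i,t]}$ were to alternate on the circle, then no choice of arcs could satisfy the non-overlapping requirement between $I_{s,u_i}$ and $I_{u_i,t}$. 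Hence the leaf-intervals of any parse tree form a pairwise disjoint family of arcs, and since $\varphi_\pi$ cuts the circle at only two points (between $\pi(n/2)$ and $\pi(n/2+1)$, and between $\pi(n)$ and $\pi(1)$), at most two of those arcs can contain a cut point, so at most two leaves are bi-chromatic.

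Putting things together, $\rank(f) \le 2^{O(\sqrt{n}\log n \log S)} \cdot 2^{\sqrt{n}}$, which matches the claimed bound. The main obstacle is the disjoint-arc propagation: nested arcs are permitted by the bare definition of non-overlapping, and under nesting many leaves could pile up around a single cut point, each contributing up to $2^{\sqrt{n}/2}$ to the rank and collectively destroying the bound. The proof therefore hinges on a structural argument that at every level of the recursive balanced decomposition one may always select disjoint (rather than nested) arcs for the two children, which is where the strict---as opposed to merely non-crossing---circular-interval condition on $P$ is used essentially.
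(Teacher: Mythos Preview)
Your proposal is essentially the paper's proof: apply Lemma~\ref{lem:formula-interval}, take $\varphi=\varphi_\pi$, bound the number of parse trees by $2^{O(\sqrt{n}\log n\log S)}$, and use Lemma~\ref{lem:rankub} to reduce the rank of each parse tree to $2^{\sqrt{n}}$ by arguing that at most two leaves per parse tree are bi-chromatic. The only difference is in how the ``at most two bi-chromatic leaves'' step is justified: you propose an inductive strengthening showing that the leaf arcs in any parse tree can be chosen pairwise \emph{disjoint} (not merely non-crossing), whereas the paper argues directly that three bi-chromatic leaf intervals would force two of them to share a cut point and hence, via the in-series structure of the leaves along a path in $P$, produce nodes $u,a,v$ with $I_{ua}$ and $I_{av}$ overlapping, contradicting the strict circular-interval property. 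Your worry about nested arcs is legitimate, and your disjoint-arc induction is a clean way to close it; the paper's shortcut leans on the fact that the parse-tree leaves correspond to consecutive subprograms $[u_i,v_i]$ along a single $s\!\rightsquigarrow\! t$ path, which is what lets one invoke condition~(2) of the strict circular-interval definition.
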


\begin{proof}
Let $\Phi$ be the syntactic multilinear formula  constructed from $P$ as given by Lemma~\ref{lem:formula-interval}. Note that any parse tree of $\Phi$ has at most $3\sqrt{n}$ leaves. The number of  parse trees of $\Phi$ is at most $\binom{2^{O({\log n\log S})}}{3\sqrt{n}} \leq 2^{\epsilon\log n \log S \sqrt{n}}$. Let $T$ be any parse tree of $\Phi$ with leaves $w_1,\ldots,w_\ell$ computing polynomials $p_1,\ldots,p_{\ell}$. We have $f=\sum_{T:\text{parse tree of $\Phi$}} m(T)$ where $m(T)$ be the product of multilinear polynomials corresponding to the leaves of $\Phi$ in $T$. Let $X_1,\ldots,X_\ell \subseteq X$ be such that $p_i$ is a polynomial in $\mathbb{F}[X_i]$. For every $i\in [\ell]$, let $M_i=\{ j \mid x_j\in X_i\}$ be the index set of $X_i$.  As $P$ is a strict circular-interval ABP, we have that sets $M_1,\ldots,M_\ell$ are   circular $\pi$-intervals in $\{1,\ldots,n\}$ for some $\pi\in S_n$. Let $\varphi_{\pi}: X \rightarrow Y \cup Z$ be the partition function  described above. 
If $X_i$ is bi-chromatic then ${\sf rank}_{\varphi_\pi}(p_i)\leq 2^{\sqrt{n}/2}$ as $|X_i|\leq \sqrt{n}$ by construction of formula $\Phi$ when $w_i$ is a leaf in $\Phi$. 

A crucial observation is that for any parse tree $T$ of $\Phi$, at most two of $\varphi_\pi(X_1),\ldots,\varphi_\pi(X_\ell)$ are bi-chromatic.  This is because the existence of bi-chromatic sets $\varphi_\pi(X_i),\varphi_\pi(X_j),\varphi_\pi(X_k)$ for some $i,j,k\in [\ell]$ implies that the circular $\pi$-intervals $M_i,M_j,M_k$ are overlapping from the way partition $\varphi_\pi$ is defined. As $X_i,X_j,X_k$ are variable sets associated with leaves of the same parse tree $T$, we can conclude that when 
$\varphi_\pi(X_i),\varphi_\pi(X_j),\varphi_\pi(X_k)$ are bi-chromatic there exists nodes $u,a,v$ in $P$ such that circular $\pi$-intervals $I_{ua}$ and $I_{av}$ are overlapping, a contradiction to the fact that $P$ is a strict circular-interval ABP.

Therefore, in any parse tree $T$ of $\Phi$, at most two of $\varphi_\pi(X_1),\ldots,\varphi_\pi(X_\ell)$ are bi-chromatic say $\varphi_\pi(X_i)$ and $\varphi_\pi(X_j)$. Hence ${\sf rank}_{\varphi_\pi}(p_i)\leq 2^{\sqrt{n}/2}$ and ${\sf rank}_{\varphi_\pi}(p_j)\leq 2^{\sqrt{n}/2}$. Also, ${\sf rank}_{\varphi_\pi}(p_k)\leq 1$ for all $k\neq i,j$. Thus, ${\sf  rank}_{\varphi_\pi}(f) \le 2^{\epsilon\log n \log S \sqrt{n}} 2^{\sqrt{n}}$.
\end{proof}

With the above, we can prove Theorem~\ref{thm:interval}:

\intervallb*

\begin{proof}
Let $P$ be a strict circular-interval ABP of size $S=2^{o(\sqrt{n} /\log n)}$ computing $g$ and $\Phi$ be the syntactic multilinear formula obtained from $P$ using Lemma~\ref{lem:formula-interval}.  By Theorem \ref{thm:rank-ub-interval}, there exists a partition $\varphi:X \rightarrow Y\cup Z$ with $|\varphi(X)\cap Y|=|\varphi(X)\cap Z|=|\varphi(X)|/2$ such that $\rank(g) \leq 2^{\sqrt{n} + \epsilon\log n \log S \sqrt{n} } < 2^{n/2}$. However, by Lemma \ref{lem:ry}, $\rank(g)=2^{n/2}$, a contradiction. Hence $s=2^{\Omega(\sqrt{n} /\log n)}$.
 \end{proof}
Before concluding the section, we   observe that the arguments above imply a separation between  models that are closely related to strict circular interval ABPs.

%\begin{obs}
%\label{obs:interval-ckt}
%The multilinear circuit computing the polynomial $g$ in Definition \ref{def:raz-poly} given in \cite{RY09}
%is an interval multilinear circuit.
%\end{obs}

%From Observation \ref{obs:interval-ckt} and Theorem \ref{thm:interval} we have the following corollary:

%\input{interval-updated}
\section{$\lo$-ordered ABPs}
\label{sec:ordered}

In this section we consider the case of $\lo$-ordered syntactic multilinear ABPs 
Ordered syntactic multilinear ABPs are well studied in the literature~\cite{Jan08,JQS09}. It is known that $1$-ordered syntatic multilinear ABPs are equivalent to syntactic multilinear ROABPs~\cite{JQS09}. In  Section~\ref{sec:pass}, we show that this result can be generalized to $\lo$-ordered ABPs where the resulting ABP makes at most $\lo$-passes on the variables, although in different orders. 
Further, in Section~\ref{sec:order-lb} we obtain an exponential lower bound for the sum of  $\lo$-ordered syntactic multilinear ABPs, where $\lo = 2^{n^{\frac{1}{2}-\epsilon}}$ for a small   constant $\epsilon>0$. 
%Consider an $L$-ordered   syntactic multilinear ABP $P$ of size $S$.  Let $F$ be the formula obtained from $P$ applying Lemma~\ref{lem:formula-interval}. 
\subsection*{${\cal L}$-ordered to ${\cal L}$-pass}
\label{sec:pass}
In this section, we show that $\lo$-ordered ABPs can be transformed into ABPs that make at most $\lo$-passes on the input, although in different orders.

\begin{theorem}
\label{thm:order-to-pass}
Let $P$ be an $\lo$-ordered ABP of size $S$ computing a polynomial $f\in\mathbb{F}[x_1,\ldots,x_n]$. Then there is an $\lo$-pass ABP $Q$ of size $\poly(\lo,S)$ computing $f$.
\end{theorem}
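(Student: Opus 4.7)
The plan is to extend the JQS09 construction, which handles the $\lo=1$ case by transforming a 1-ordered smABP of order $\pi$ into an ROABP of the same order with polynomial blowup. Recall that JQS09 layers the ABP so that its $\ell$-th layer reads only $x_{\pi(\ell)}$ (or $1$), by augmenting each state with a position-in-$\pi$ counter and inserting identity edges for variables that are skipped along a path.

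Given an $\lo$-ordered smABP $P$ with orders $\pi_1, \dots, \pi_\lo$, I will first construct, for each $i \in [\lo]$, an ROABP $R_i$ of order $\pi_i$ and size $\poly(S)$ by applying the JQS09 layering to $P$ with respect to $\pi_i$; only the source-to-sink paths of $P$ consistent with $\pi_i$ have a nonzero surviving counterpart in $R_i$. I will then assemble $R_1, \dots, R_\lo$ into a single $\lo$-pass smABP $Q$ that computes a \emph{sum} (rather than a product) of the $R_i$'s. For this I use the matrix identity
\[
\begin{pmatrix}1 & R_1\\ 0 & 1\end{pmatrix}\begin{pmatrix}1 & R_2\\ 0 & 1\end{pmatrix}\cdots\begin{pmatrix}1 & R_\lo\\ 0 & 1\end{pmatrix} \;=\; \begin{pmatrix}1 & \sum_i R_i\\ 0 & 1\end{pmatrix},
\]
realising each factor as a pass of $Q$ built from $R_i$ together with two parallel identity (label-$1$) chains of the same depth as $R_i$, wired into a pair of top/bottom boundary states at each segment boundary. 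Each pass is then genuinely an ROABP---layer $\ell$ of pass $i$ contains $x_{\pi_i(\ell)}$-edges from $R_i$ alongside $1$-edges from the identity chains, so it is oblivious and read-once within the pass---and $Q$ is a bona fide $\lo$-pass smABP of total size $\sum_i |R_i| + O(\lo \cdot n) = \poly(S,\lo)$.

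The main obstacle is correctness. A path $\rho$ of $P$ may be consistent with several of the $\pi_i$'s, in which case the sum $\sum_i R_i$ over-counts the contribution of $\rho$. To repair this, each $R_i$ must be restricted to paths canonically assigned to order $\pi_i$ (for example by $i^*(\rho) := \min\{i : \rho \text{ is }\pi_i\text{-consistent}\}$), so that $R_i$ computes $f_i = \sum_{\rho:\, i^*(\rho)=i}{\sf wt}(\rho)$ and hence $\sum_i f_i = f$. Enforcing $i^*(\rho)=i$ inside an ROABP of order $\pi_i$ amounts to tracking, for each $j<i$, whether the prefix read so far is still $\pi_j$-consistent and, if so, what the next admissible position along $\pi_j$ would be. A naive implementation blows up the state by a factor of $(n+1)^{\lo-1}$; carrying out this tracking within $\poly(S,\lo)$ state, by exploiting the monotone degradation of consistency (once violated it stays violated) and by sharing positional information across orders, is the technically delicate part of the proof and the source of the ``lot of book-keeping of variable orders'' alluded to in the introduction.
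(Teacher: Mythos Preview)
Your plan correctly identifies the central difficulty—avoiding double-counting of paths consistent with several $\pi_i$—and even names the right canonical assignment $i^*(\rho)=\min\{i:\rho\text{ is }\pi_i\text{-consistent}\}$. But you stop exactly at the hard step: you do not show how $R_i$, reading variables in $\pi_i$-order, can test that $\pi_1,\dots,\pi_{i-1}$ were each violated somewhere along $\rho$, within $\poly(S,\lo)$ states. To check $\pi_j$-consistency online one must remember the largest $\pi_j$-rank seen so far, a quantity that varies from path to path and is not determined by the current node of $P$; your own estimate of $(n+1)^{\lo-1}$ reflects this. ``Monotone degradation'' only lets you collapse the ``already violated'' case to a single absorbing state, which does not shrink the product, and ``sharing positional information across orders'' would require recovering all $\pi_j$-maxima from some compact summary, which in general encodes the full set of variables read so far. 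As written, this is a genuine gap, not mere book-keeping.

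The paper avoids the problem by \emph{not} building independent ROABPs at all. It makes $\lo$ copies $u_{1rj},\dots,u_{\lo rj}$ of every node $u_{rj}$ of $P$ and lays them out in $\lo$ consecutive blocks. An edge $e=(u_{rj},u_{r+1,j'})$ of $P$, viewed from copy $i$, is wired to copy $m\ge i$ of $u_{r+1,j'}$, where $m$ is the least index for which every $s'\!\to\!u_{irj}$ path extended by $e$ remains $\pi_m$-consistent. This routing is a deterministic function of the edge and the current node of $Q$, so each $s$--$t$ path of $P$ lifts to exactly one $s'$--$t'$ path of $Q$ with the same weight; there is no sum over orders and hence nothing to de-duplicate. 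The state is just (block index, node of $P$), giving $O(\lo\cdot S)$ nodes before padding to a layered program. Crucially, block $i$ never has to \emph{verify} that earlier orders were violated, because a path can reach block $i$ only by having been routed there when a previous block failed. Your additive matrix gadget is then unnecessary.
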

\begin{proof}
Let $P$ be an $\lo$-ordered ABP of size $S$ computing a polynomial $f$. Let $L_0,L_1,\ldots,L_{\ell}$ be the layers of $P$ where source $s$ and sink $t$ are the only nodes in layers $L_0$ and $L_\ell$ respectively. Let  $u_{i1},\ldots,u_{iw}$ be nodes in $L_i$, where $w\le S$ is the width of $P$.  Without loss of generality assume every node in $P$ has in-degree and out-degree at most two, and every layer except $L_0$ and $L_{\ell}$ has exactly $w$ nodes. Also, every $s$ to $t$ path in $P$ respects one of the permutations $\pi_1,\pi_2,\ldots,\pi_{\cal L}$. We now construct an ${\cal L}$-pass ABP ${\cal Q}$ that reads variables in the order $(x_{\pi_1(1)},x_{\pi_1(2)},\ldots,x_{\pi_1(n)}),\ldots,(x_{\pi_\lo(1)},x_{\pi_\lo(2)},\ldots,x_{\pi_\lo(n)}).$
%\subsubsection*{Construction of the ABP ${\cal Q}$ :}
The source and sink of ABP ${\cal Q}$ are denoted by $s'$ and $t'$ respectively.  The number of layers in ${\cal Q}$ will be bounded  by $\lo (\ell+1)$  and are labeled as  $L_{ir}, i\in [\lo],r\in \{0,\ldots,\ell\}$.  Intuitively,   for a  node $u_{rj}$ in layer $L_r$ in $P$, we have  $\lo$ copies, $u_{1rj},u_{2rj}\ldots, u_{\lo rj}$ in ${\cal Q}$, where $u_{irj}$ is a vertex in layer $L_{ir}$. Intuitively, $ u_{irj}$ would have all paths from $s$ to $u_{rj}$ that respect the permutation $\pi_i$, but none of the permutations $\pi_p$ for $p<i$.    To ensure that the resulting ABP is $\lo$-pass, we place the layers as follows : $L_{11}, \dots, L_{1\ell}, L_{21},\dots, L_{2\ell}, \dots, L_{\lo 1},\dots, L_{\lo \ell}.$

  We construct $Q$ inductively as follows : \begin{enumerate}
\item[(1)] \textbf{Base Case :} In ABP $P$, for every edge $e$  from source $s$ in layer $L_0$ to node $u_{1j}, j\leq w$ in layer $L_1$ labeled by $\lab(e)\in X\cup \mathbb{F}$, if $\lab(e)=x_k$, then add  the edge   $(s',u_{m1j})$ with label $x_k$   where $m$ is the smallest value such that $x_k$ is consistent with $\pi_m$,  if $\lab(e)=\alpha\in \mathbb{F}$, then add  the edge $(s', u_{m1j})$ with label $\alpha$.
\item[(2)] \textbf{Induction Step :} Consider layer $L_r,r\in\{1,\ldots,\ell\}$:
\begin{enumerate}
\item For every  node $u_{rj}$ in layer $L_r$ of $P$, with $1\le j \le w$ and  every  edge  $e$ of the form   $e=(u_{rj}, u_{r+1, j'})$ do the following:
\begin{description}
\item[Case~1:]  $\lab(e) =x_k \in X$. For every $1\le i\le \lo$, let $m$ be the smallest index such that every path from $s'$ to $u_{irj}$ concatenated with the edge $e$ is consistent with $\pi_{m}$. Note that, by the construction, $m\ge i$. Add  the  edge  $(u_{irj}, u_{mr+1j'})$ in ${\cal Q}$ for every $i$ with label $x_k$. For every $1\leq i \leq {\cal L}$, note that the choice of $m$ is unique.
\item[Case~2:] $\lab(e) =\alpha \in \mathbb{F}$. For every $1\le i \le \lo$, add edge $(u_{irj}, u_{ir+1j})$ with label $\alpha$.
\end{description}

%is atmost 2. Therefore, there are atmost two nodes $u_{i-1,k}$ and $u_{i-1,r}$ in layer $L_{i-1}$ in $P$ such that there are edges in $P$ $e_1=(u_{i-1,k},u_{ij})$ and $e_2=(u_{i-1,r},u_{ij})$ with labels $\ell(e_1)$ and $\ell(e_2)$ respectively. Inductively, consider the node $v_{i-1,k}$ in layer say $L_{ab}$ in $Q$ where $a\in [\lo],b\in [n]$. If $\ell(e_1)$ (respectively $\ell(e_2)$) are constants in $\mathbb{F}$, then add a node 
%$v_{ij}^{(1)}$ (respectively $v_{ij}^{(2)}$) in a new layer between $L_{ab}$ and $L_{ab+1}$. Add an edge from  $v_{i-1,k}$ (respectively $v_{i-1,r}$) to $v_{ij}^{(1)}$ (respectively $v_{ij}^{(2)}$) labeled by the corresponding constant. If $\ell(e_1)$ or $\ell(e_2)$ are variables $x_{m_1}$ and $x_{m_2}$ respectively, then : Let $c_1$ (respectively $c_2$) be the first pass in $Q$ with $c_1\geq a$ (respectively $c_2\geq a$) such that $x_{m_1}$ (respectively $x_{m_2}$) appears in $c_1$ (respectively $c_2$). Add an node $v_{ij}^{(1)}$ in layer $L_{c_1,m_1}$ and a node $v_{ij}^{(2)}$ in layer $L_{c_2,m_2}$. Add an edge from $v_{i-1,k}$ to $v_{ij}^{(1)}$. Add an edge from $v_{i-1,r}$ to $v_{ij}^{(2)}$. If $v_{ij}^{(1)}$ appears in layer ahead of  $v_{ij}^{(2)}$ add an edge from $v_{ij}^{(1)}$ to $v_{ij}^{(2)}$ labeled by the constant 1 and denote 
%$v_{ij}^{(2)}$ as $v_{ij}$.
%Else add edge from $v_{ij}^{(2)}$ to $v_{ij}^{(1)}$ labeled by constant 1 and denote $v_{ij}^{(1)}$ as $v_{ij}$.
\item[(b)] Create the node $t'$ in ${\cal Q}$, and add edges $(u_{i\ell 1}, t')$ with label $1$ for every $1\le i\le \lo$. 
\end{enumerate}
\end{enumerate}

\noindent Note that in the above construction, the resulting branching program will not be layered. It can be made layered by adding suitable new vertices and edges labeled by $1 \in \mathbb{F}$. 
\begin{claim}
\label{claim:order-to-pass}
We now prove the following:
\begin{enumerate}
\item[(1)] $Q$ is an  $\lo$-pass syntactic multilinear ABP and has size $\poly(\lo,S)$.
\item[(2)]  For $1\le r \le \ell$ and node $u_{rj}$ in layer $L_r$ in $P$, $1\le j\le w$, $[s,u_{rj}]_P = \sum_{i=1}^{\lo} [s',u_{irj}]_{\cal Q}.$
\end{enumerate}

\end{claim}
{\em Proof of Claim \ref{claim:order-to-pass}} : Clearly, for each $1\le i\le \lo$, the layers $L_{i1}, \ldots , L_{i\ell}$ in ${\cal Q}$ do a single pass on the variable in the order $\pi_i$.  Therefore,  ${\cal Q}$ is an  $\lo$-pass multilinear ABP reading variables in the order
$$(x_{\pi_1(1)},x_{\pi_1(2)},\ldots,x_{\pi_1(n)}),(x_{\pi_2(1)},x_{\pi_2(2)},\ldots,x_{\pi_2(n)}),\ldots,(x_{\pi_\lo(1)},x_{\pi_\lo(2)},\ldots,x_{\pi_\lo(n)}).$$

We prove (2) using induction on $r$.  For $r=0$, the statement follows immediately from the construction. Suppose, $r>0$.  
We have 
\begin{align*}
 [s, u_{rj}]_P &= \sum_{e = (u_{r-1j'}, u_{rj})} \lab(e) [s,u_{r-1j'}]_P =  \sum_{e = (u_{r-1j'}, u_{rj})} \lab(e)\cdot  \sum_{i=1}^\lo [s', u_{ir-1j'}]_{\cal Q} \\
            &=\sum_{e = (u_{r-1j'}, u_{rj})}\sum_{i=1}^\lo \lab(u_{ir-1j'}, u_{mrj})[s',u_{ir-1j'}]_{\cal Q} 
            = \sum_{i=1}^\lo [s',u_{irj}]_{\cal Q}.
          \end{align*}
          In the above, $m$ is the index as defined in the construction of ${\cal Q}$. \qedhere
% Let $u_{ij}$ be a node in $P$ in layer $L_i$ with edges incoming from $u_{i-1,k}$ and $u_{i-1,r}$ labeled by $x_{m_1}$ and $x_{m_2}$ respectively. That is, $u_{ij} = u_{i-1,k}\cdot x_{m_1} + u_{i-1,r}\cdot x_{m_2}$. By construction, in ABP ${\cal Q}$, we have :
%\begin{align*}
%v_{ij} &= v_{ij}^{(1)} +  v_{ij}^{(2)} \\
%&= v_{i-1,k}\cdot x_{m_1} + v_{i-1,r}\cdot x_{m_2} \hspace*{11 mm} \text{~By construction of ${\cal Q}$}\\
%&= u_{i-1,k}\cdot x_{m_1} + u_{i-1,r}\cdot x_{m_2} \hspace*{10 mm} \text{~By induction on structure of ${\cal Q}$} \\
%&= u_{ij}
%\end{align*} 

\end{proof}

%\section{$L$-ordered ABPs}
\subsection*{Lower bound for sum of $\lo$-ordered ABPs}
\label{sec:order-lb} 
It may be noted that the transformation of $\lo$-ordered ABPs to $\lo$-pass ABPs combined with the lower bounds for $\lo$-pass ABPs given in~\cite{CR18}, we have exponential lower bounds for sum of $\lo$-ordered ABPs when $\lo$ is bounded by $o(\log n)$.  In this section, we show that by observing a simple property of the  ABP to formula conversion given in Lemma~\ref{lem:formula-interval},  we can obtain lower bounds for $\lo$-ordered ABPs for larger sub-exponential values $\lo$. 
In the following, we observe that in the formula obtained using Lemma~\ref{lem:formula-interval} obtained from an $\lo$-ordered ABP, a lot of the leaves in any parse tree are 
in fact $1$-ordered ABPs:
\begin{lemma}
\label{lem:order-formula}
Let $P$ be an $\lo$-ordered ABP and $F$ be the syntactic multilinear formula obtained from $P$ using Lemma~\ref{lem:formula-interval}. 
Then, for any  parse tree $T$ of $F$, all but at most $O(\log \lo)$ many leaves of $T$ are $1$-ordered ABPs (ROABPs).
\end{lemma}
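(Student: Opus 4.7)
The plan is to bound the number of multi-ordered subprogram leaves via a direct injection into the set of global orders $\{\pi_1,\ldots,\pi_\lo\}$. First, I would fix an arbitrary parse tree $T$ of $F$ and peel away the edge-label leaves: by the construction in Lemma~\ref{lem:formula-interval}, every product gate contributes a factor of the form $\lab(u,v)$, which is either a single variable or a constant and is therefore trivially $1$-ordered. Hence it suffices to bound only the multi-ordered \emph{subprogram leaves} $\ell_1,\ldots,\ell_m$, arranged in the natural in-order of $T$, where $\ell_i=[u_i,v_i]_P$. A key structural observation, which I would verify by a straightforward induction on the recursion of Lemma~\ref{lem:formula-interval}, is that any two consecutive subprogram leaves $\ell_i,\ell_{i+1}$ are separated in $T$ by exactly one split edge $(v_i,u_{i+1})$ of $P$, namely the split edge chosen at the least common ancestor of $\ell_i$ and $\ell_{i+1}$; together these split edges form an $s$-to-$t$ skeleton in $P$ that links the sinks of successive $\ell_i$'s to the sources of their successors.

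Next, for each $i$ let $\Pi_i$ denote the set of distinct restrictions $\pi_j|_{X_{u_i,v_i}}$ arising from $u_i$-to-$v_i$ sub-paths of $s$-to-$t$ paths in $P$, and put $k_i=|\Pi_i|$. Then $\ell_i$ is a $1$-ordered ABP (ROABP) exactly when $k_i=1$. The heart of the argument is the inequality $\prod_{i=1}^m k_i \le \lo$, which I would establish by constructing an injection $\Phi:\Pi_1\times\cdots\times\Pi_m \hookrightarrow \{\pi_1,\ldots,\pi_\lo\}$. Given a tuple $(\sigma_1,\ldots,\sigma_m)$ with $\sigma_i\in\Pi_i$, I would pick for each $i$ a $u_i$-to-$v_i$ path $\rho_i$ in $P$ respecting $\sigma_i$ (which exists by the definition of $\Pi_i$) and form the concatenation
\[
\rho \;:=\; \rho_1 \cdot (v_1,u_2) \cdot \rho_2 \cdot (v_2,u_3) \cdot \ldots \cdot (v_{m-1},u_m) \cdot \rho_m,
\]
which by the skeleton claim is a legitimate $s$-to-$t$ path in $P$. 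Since $P$ is $\lo$-ordered, $\rho$ must respect some $\pi_j\in\{\pi_1,\ldots,\pi_\lo\}$, and by construction $\pi_j|_{X_{u_i,v_i}}=\sigma_i$ for every $i$. Setting $\Phi(\sigma_1,\ldots,\sigma_m):=\pi_j$, two tuples with the same image would each force $\sigma_i=\pi_j|_{X_{u_i,v_i}}$ for all $i$ and hence coincide; so $\Phi$ is injective and $\prod_i k_i \le \lo$.

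The bound then drops out:
\[
2^{|\{i\,:\,k_i>1\}|} \;\le\; \prod_{i\,:\,k_i>1} k_i \;\le\; \prod_{i=1}^m k_i \;\le\; \lo,
\]
giving $|\{i:k_i>1\}| \le \log_2 \lo = O(\log \lo)$, as required. The main obstacle is the skeleton claim itself---ensuring that for every parse tree of $F$, consecutive subprogram leaves are joined by a single split edge of $P$ in the required $v_i \to u_{i+1}$ direction. This is the only step that genuinely exploits the structure in Lemma~\ref{lem:formula-interval} (as opposed to an arbitrary multilinear formula for $f$), and it is verified by induction on the parse tree: each product gate is born from a single decomposition step at a single edge $(u,v)\in E_{rb}\cup E_{gb}$, and that edge sits in $T$ exactly between the rightmost subprogram leaf of the left subtree and the leftmost subprogram leaf of the right subtree.
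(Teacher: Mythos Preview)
Your argument follows the same route as the paper's: arrange the subprogram leaves of a parse tree into a chain $[u_1,v_1]\cdot(v_1,u_2)\cdot[u_2,v_2]\cdots$ sitting inside $P$, argue that the product of the local order-counts is at most $\lo$, and take logs. The paper simply asserts this product bound (``the total number of variable orders in $P'$ is at least $r_1\cdots r_\ell$''); you make it explicit via the injection $\Phi$, and your skeleton claim about consecutive leaves being joined by a single split edge is exactly the structural fact the paper uses.

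One caveat on the injection: your injectivity step is not quite airtight as written. From the concatenated path $\rho$ respecting $\pi_j$ you only obtain that $\rho_i$ is consistent with $\pi_j|_{X_{u_i,v_i}}$, not that $\sigma_i$ \emph{equals} $\pi_j|_{X_{u_i,v_i}}$---because $\rho_i$ need not read every variable of $X_{u_i,v_i}$ and so can be consistent with several distinct elements of $\Pi_i$ simultaneously. (For a concrete obstruction, take a subprogram with three two-edge paths $(x_1,x_2),(x_2,x_3),(x_3,x_1)$: any single path is consistent with two of the three cyclic orders.) This is precisely the soft spot the paper's one-line assertion glosses over, so your proposal is at the same level of rigor as the original; but if you want the injection to go through cleanly you should either pick the $\rho_i$ to be witnesses that pin down $\sigma_i$ uniquely among the elements of $\Pi_i$, or rework the counting so that only pairwise order-incompatible paths are used.
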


%
%\begin{lemma}
%\label{lem:order-formula}
%Let $P$ be an \orderbp with $L$ different orders and $F$ be the formula obtained from $P$ using Lemma~\ref{lem:formula-interval}. Then, for any for any proving subtree $T$ of $F$, all except at most $O(\log L)$ many leaves are computable by \orderbp\ 's with single order.
%\end{lemma}

\begin{proof}
Let $T$ be any parse tree of $F$ with leaves $w_1,\ldots,w_{\ell}$ and and let $p_1, \ldots, p_{\ell}$ be the polynomials labeling $w_1,\ldots,w_{\ell}$.  From the construction given in the proof of Lemma~\ref{lem:formula-interval}, corresponding to each leaf $w_i$ there are nodes $u_i, v_i$ in $P$ such that polynomial $p_{i} = [u_i,v_i]\cdot \lab{(v_i, u_{i+1})}$.  Consider the syntactic multilinear ABP $P'$  obtained by placing programs $$[u_1,v_1]\cdot \lab{(v_1, u_{2})}, [u_2,v_2]\cdot \lab{(v_2, u_{3})}, \dots, [u_i,v_i]\cdot \lab{(v_i, u_{i+1})}, \dots , [u_\ell,v_\ell]
$$ in the above order {and identifying nodes appropriately}. From the construction above,  $P'$ is a  sub program of $P$ and hence the number of  variable orders in $P'$ is a lower bound on the  number of  variable orders in $P$. If $r_i$ is the number of variable  orders in the sub program $[u_i, v_i]$, the total number of variable orders in the sub program $P'$ (and hence $P$) is at least $r_1\cdot r_2 \cdots r_{\ell}$. Since the number of distinct orders is at most $\lo$, we  conclude that $|\{i~|~ r_i \ge 2\}| \le \log \lo$, as required.
\end{proof}
For the remainder of the section, let ${\cal D}$ denote the uniform distribution on the set of all partitions $\varphi:X \to Y\cup Z$ with $|Y| = |Z|.$
\begin{lemma}
\label{lem:rank-ordered} Let $P$ be an $\lo$-ordered ABP of size $S$ computing a polynomial $f$. Then for $k = n^{1/20}$,
$\Pr_{\varphi\sim {\cal D}}[\rank(f)> 2^{\log n \log S \sqrt{n}}\cdot 2^{n/2 - k\sqrt{n}}] \le 2^{-O(n^{1/20})}.$
\end{lemma}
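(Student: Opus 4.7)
The plan is to reduce the problem to the ROABP rank bound of Lemma~\ref{lem:roabp} via the formula conversion of Lemma~\ref{lem:formula-interval}. I would first convert $P$ into a syntactic multilinear formula $\Phi$ of size $S^{O(\log n)}$ whose every leaf computes a subprogram of $P$ on at most $\sqrt n$ variables and whose every parse tree has at most $3\sqrt n$ leaves, and then expand $f=\sum_T m(T)$ over parse trees $T$ of $\Phi$. Since there are at most $\binom{|\Phi|}{3\sqrt n}=2^{O(\sqrt n \log n\log S)}$ parse trees, sub-additivity of rank (Lemma~\ref{lem:sub-aditivity}) reduces the task to bounding $\rank(m(T))$ uniformly in $T$.

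Fix a parse tree $T$ with disjoint-variable leaf polynomials $p_1,\ldots,p_\ell$; sub-multiplicativity gives $\rank(m(T))\le\prod_i\rank(p_i)$, and by Lemma~\ref{lem:order-formula} at most $O(\log\lo)$ of these leaves fail to be ROABPs. For these ``bad'' leaves I would use only the trivial bound $\rank(p_i)\le 2^{|X_i|/2}$. For each ``good'' (ROABP) leaf with $|X_i|\ge n^{1/4}$, I would apply Lemma~\ref{lem:roabp} to its subprogram and obtain $\rank(p_i)\le S^{\log|X_i|}\cdot 2^{|X_i|/2-|X_i|^{1/5}}$ except with probability $2^{-|X_i|^{1/5}}\le 2^{-n^{1/20}}$; very small good leaves also take the trivial bound. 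The bad and tiny leaves together involve only $O(\log\lo\cdot\sqrt n)+O(n^{3/4})=o(n)$ variables, so large good leaves absorb $\Omega(n)$ variables. By concavity of $x\mapsto x^{1/5}$ together with the constraint $|X_i|\le\sqrt n$, the minimum of $\sum_{\text{large good}}|X_i|^{1/5}$ is $\Omega(n^{3/5})\ge k\sqrt n$, whence $\rank(m(T))\le 2^{O(\sqrt n\log n\log S)}\cdot 2^{n/2-k\sqrt n}$. Summing over parse trees and union-bounding over the $S^{O(\log n)}$ \emph{distinct} leaves of $\Phi$ yields both the rank bound and the $2^{-\Omega(n^{1/20})}$ failure probability claimed.

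The main obstacle is the partition mismatch: Lemma~\ref{lem:roabp} is stated with respect to a uniformly random \emph{balanced} partition of the leaf polynomial's own variable set, whereas $\varphi\sim{\cal D}$ is balanced only on the ambient $X$ and merely induces an approximately balanced partition on each $X_i$. I would reconcile this by conditioning on the hypergeometric-distributed split size $|\varphi(X_i)\cap Y|$, which concentrates within $\widetilde{O}(\sqrt{|X_i|})$ of $|X_i|/2$; the small probability cost of this conditioning is comfortably absorbed by the $2^{-n^{1/20}}$ margin. A second piece of bookkeeping is to take the union bound over distinct subprograms appearing as leaves of $\Phi$ rather than over leaves-per-parse-tree (which overcounts wildly) and to verify that $\log\lo\cdot\sqrt n\ll k\sqrt n$ in the regime of interest, so the bad-leaf loss is a genuine lower-order term.
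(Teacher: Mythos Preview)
Your approach is essentially the paper's: convert via Lemma~\ref{lem:formula-interval}, expand over parse trees, invoke Lemma~\ref{lem:order-formula} to isolate the ROABP leaves, apply Lemma~\ref{lem:roabp} to each, and union-bound. The paper differs only cosmetically---it thresholds ``large'' leaves at $|X_i|\ge\sqrt n/6$ and counts them directly instead of your concavity step, and it tightens the union bound to the at most $S^2$ \emph{distinct} subprograms (pairs of nodes in $P$) that can occur as leaf labels rather than the full formula size; it does not address the partition-mismatch issue you correctly flag.
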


Now, we are ready to prove Theorem~\ref{thm:sum-ordered}:

%\label{thm:sum-ordered}
%Suppose $\lo\leq n^{1/20}$ and $g = f_1+f_2 +\dots + f_{m}$, where $f_i$ is computed by $\lo$-ordered ABP $P_i$ of size $S_i$. Then, either there is an $i\in [m]$ such that $size(P_i)$ is atleast $2^{n^{1/40}}$ or $m \ge 2^{n^{1/40}}$, where $g$ is the polynomial defined in~\cite{RY09}.
%%Suppose $g = f_1+f_2 +\dots + f_{m}$, where $f_i$ is an $\orderbp$ with $\lo$ orders. Then, either there is an $f_i$ where the size of the  $\orderbp$ with $L$ orders computing it is at least $2^{n^{1/40}}$ or $m \ge 2^{n^{1/40}}$, where $g$ is the polynomial defined in~\cite{RY09}.
%\end{theorem}
\orderedlb*

\begin{proof}
Set $k = n^{1/32}$. Suppose, for every $i$, $f_i$ is computed by $\lo$-ordered ABP of size $2^{n^{1/40}}$. Then $\rank(f_i) > 2^{(\log n\log (2^{n^{1/40}} )\sqrt{n}} 2^{n/2 - k\sqrt{n}}$  with probability at most 
 $2^{2n^{1/40}}2^{-n^{1/20}}$  when $\varphi\sim {\cal D}$.  Therefore, probability that there is a $i$ such that  $\rank(f_i) > 2^{(\log n\log S )\sqrt{n}} 2^{n/2 - k\sqrt{n}}$ is at most $ m 2^{2n^{1/40}}2^{-n^{1/20}}< 1$ for $m<2^{n^{1/40}}$.  By union bound, there is a $\varphi\sim {\cal D}$ such that for every $i$,  $\rank(f_i) < 2^{(\log n\log (2^{n^{1/40}} ))\sqrt{n}} 2^{n/2 - k\sqrt{n}} < 2^{n/2}$.  But $\rank(g)= 2^{n/2}$ for every partition $\varphi$, which is a contradiction.  
\end{proof}

%\input{read-reduction}
%\input{discussion}

%\section{Discussion}
%\begin{enumerate}
%\item[1.] Clearly, any ROABP is a strict circular-interval ABP. However, it is not clear if strict circular-interval ABPs are stronger than ROABPs. Are there polynomials computable by polynomial size strict circular-interval ABPs but any ROABP computing it requires exponential size? In this chapter, we prove exponential lower bounds for both strict circular-interval ABPs and sum of ROABPs but the interconnection between these two models remain unknown. 
%\item[2.] The results in Section \cite{KNS16} imply an exponential separation between $3$-ordered ABPs and $1$-ordered ABPs. Generalizing this, for every $k\in \mathbb{N}$ can we show an exponential separation between $(k+1)$-ordered ABPs and $k$-ordered ABPs? Also, as sum of $k$ ROABPs is a $k$-ordered ABP, a separation between $k$-ordered ABPs and sum of $k$ ROABPs would be interesting.
%\end{enumerate}

\bibliographystyle{plain}
\bibliography{refbib}

\end{document}